\newtheorem{theorem}{Theorem}
\newtheorem{definition}{Definition}
\newtheorem{lemma}{Lemma}
\newcommand{\pfx}{\frac{\partial}{\partial x}}
\newcommand{\pfxi}{\frac{\partial}{\partial \xi}}
\newcommand{\igzo}{\int_0^1}
\newcommand{\igzx}{\int_0^x}
\newcommand{\igxo}{\int_x^1}
\newcommand{\wh}{\hat{w}}
\newcommand{\lt}{L_2(0,1)}
\newcommand{\hlf}{\frac{1}{2}}
\newcommand{\mcl}[1]{\mathcal{#1}}
\newcommand{\pop}{\mathcal{P}}
\newcommand{\pinv}{\mathcal{P}^{-1}}
\newcommand{\sop}{\mathcal{S}}
\newcommand{\sinv}{\mathcal{S}^{-1}}
\newcommand{\zh}{\hat{z}}
\title{\LARGE \bf
Output Feedback Control of Inhomogeneous Parabolic PDEs with Point Actuation and Point Measurement using SOS and Semi-Separable Kernels
}
\author{Aditya Gahlawat$^{1}$ and Matthew M. Peet$^{2}$
\thanks{$^{1}$Aditya Gahlawat is with the Department
of Mechanical, Materials and Aerospace Engineering at the Illinois Institute of Technology, Chicago,
IL, 60616 USA and with the Grenoble Image Parole Signal Automatique Lab., Universit\'{e} Joseph Fourier/Centre National de la Recherche Scientifique, St. Martin d'Heres, France
        {\tt\small agahlawa@hawk.iit.edu}}%
\thanks{$^{2}$Matthew. M. Peet is with the School of Engineering of Matter, Transport and Energy at Arizona State University, Tempe, AZ, 85287-6106 USA
        {\tt\small mpeet@asu.edu}}%
}
\begin{document}

\maketitle
\thispagestyle{empty}
\pagestyle{empty}

\begin{abstract}
In this paper we use SOS and SDP to design output feedback controllers for a class of one-dimensional parabolic partial differential equations with point measurements and point actuation. Our approach is based on the use of SOS to search for positive quadratic Lyapunov functions, controllers and observers. These Lyapunov functions, controllers and observers are parameterized by linear operators which are defined by SOS polynomials. The main result of the paper is the development of an improved class of observer-based controllers and evidence which indicates that when the system is controllable and observable, these methods will find a observer-based controller for sufficiently high polynomial degree (similar to well-known results from backstepping).
\end{abstract}

\section{INTRODUCTION}

Parabolic Partial Differential Equations (PDEs) are a class of system used to model processes such as diffusion, transport and reaction.  Some examples of systems which have been modelled using parabolic PDEs include plasma in a tokamak~\cite{witrant2007control}, heat propagation, and spatial dynamics of population in an ecosystem~\cite{murray2002mathematical}. In this paper we consider a class of inhomogeneous linear scalar valued Parabolic PDEs. We assume that only boundary control and sensing is available for the PDEs. The control is exercised through a Dirichlet boundary condition and a Neumann boundary measurement of the state is available. The goal of this article is to use this boundary measurement to construct a boundary controller which ensures that the state of the system remains bounded in the presence of an exogenous input (has finite $L_2$-gain). We refer to this as output feedback based boundary control.

In order to design output feedback based boundary controllers, we design a Luenberger observer where the error dynamics have finite $L_2$ gain from disturbance to error. We then design a controller for the system which utilizes the state of the observer and show that for the resulting closed-loop system, there is a bound on the $L_2$ gain from disturbance to output. Our approach is based on parameterizing the set of quadratic Lyapunov functions by the set positive operators, which in turn is parameterized by polynomials and ultimately by Sum-of-Squares (SOS) polynomials and positive matrices - leading to a Linear Matrix Inequality (LMI). The approach we take in this paper is akin to LMI methods for control of linear Ordinary Differential Equations (ODEs) using Lyapunov inequalities and a variable substitution trick. However, because our inequalities are expressed as operators in Hilbert space, we refer to our approach as a Linear Operator Inequality (LOI).

This article extends our work in~\cite{gahlawat2011designing} wherein we designed output feedback boundary controllers for a one-dimensional homogenous heat equation by considering a simpler class of positive operators (also parameterized by SOS polynomials). This paper improves on the work in~\cite{gahlawat2011designing} by a) Considering the larger class of inhomogeneous, possibly unstable parabolic PDEs b) By considering a larger class of Lyapunov functions defined by positive multiplier and integral operators with semi-separable kernels and c) providing evidence (but not a proof) that this new class of operators can be used to design output-feedback based controllers whenever the system is observable and controllable. Specifically, the class of Lyapunov functions we use has the form
\begin{align*}
V(w)=&\igzo M(x)w(x)^2dx+\igzo\igzo w(x)K(x,\xi)w(\xi)d\xi dx,
\end{align*} where
\[K(x,\xi) = \begin{cases} K_1(x,\xi) & \xi \leq x \\
K_2(x,\xi) &  \xi>x \end{cases} \]
and were $M$, $K_1$ and $K_2$ are polynomials and $w$ represents the spatially distributed state of the PDE. A kernel $K$ of this form is referred to as \textit{semi-separable}.

One popular and relatively straightforward method for output feedback boundary control of PDEs is backstepping~\cite{krstic2008boundary}. This method relies on constructing an invertible operator which, in closed loop, maps the state of the system to the state of a chosen stable system for which a quadratic Lyapunov function exists. Our approach varies in the fact that we search for both the controller and the quadratic Lyapunov function. Although our approach is different, similar to backstepping, the numerical results indicate that we can construct output feedback controllers for any controllable and observable system. Some other examples of work which use Lyapunov functions for analysis and control of PDEs are~\cite{coron2008dissipative}, \cite{coron2007strict}. An example of application of LMIs for the control of PDEs is~\cite{fridman2009lmi} where the authors synthesize stabilizing boundary controllers for uncertain semi-linear PDEs using quadratic Lyapunov functions parameterized by positive scalars. Early results on the use of SOS for analysis and control of infinite-dimensional systems can be found in~\cite{peet2006positive}, \cite{papachristodoulou2006analysis}. Additional recent work on the application of polynomials to infinite-dimensional systems can be found in the research done by our colleagues in~\cite{Valmo_1} and~\cite{Valmo_2}.

The paper is organized as follows: Section~\ref{sec:pro_state} outlines the problem statement and presents background on SOS polynomials. In Section~\ref{sec:posop} we define the class of positive operators which we utilize. Section~\ref{sec:prelim} provides a controller synthesis condition and related inequalities which are later used to prove the main result. In Sections~\ref{sec:control} we provide the main results wherein we construct output feedback controllers. Section~\ref{sec:num_results} provides the numerical results for an example PDE.
\section{NOTATION}\label{sec:notation}

$\R^{m \times n}$ denotes the set of real $m$-by-$n$ matrices. $\S^n\subset \R^{n \times n}$ is the subspace of symmetric matrices. $I_n$ is the identity matrix of dimension $n \times n$ and we denote $I=I_n$ when $n$ is clear from context. For any $\Omega \subset \R$, $C^m(\Omega)$ is the space of $m$-times continuously differentiable functions defined on $\Omega$. Similarly, for any $\Omega_1,\Omega_2 \subset \R$, $C^{m,n}(\Omega_1 \times \Omega_1)$ is the space of functions which are $m$-times and $n$-times continuously differentiable on $\Omega_1$ and $\Omega_2$ respectively.  The shorthand $u_x$ denotes the partial derivative of $u$ with respect to independent variable $x$. We use $\lt$ to denote the Hilbert space of square integrable functions from $[0,1]$ to $\R$.
Unless otherwise indicated, $\langle \cdot,\cdot \rangle$ denotes the inner product on $\lt$ and $\|\cdot\|=\|\cdot\|_{\lt}$ denotes the norm induced by the inner product. Similarly, $L_2(0,\infty;\lt)$ denotes the Hilbert space of square integrable functions from $[0,\infty)$ to $\lt$ equipped with the norm
\[\norm{f}_{L_2(0,\infty;\lt)}=\left(\int_0^\infty \norm{f(\cdot,t)}^2 dt \right)^{\frac{1}{2}}.\]
$H^n(0,1):=\{y \in L_2 :\frac{d^{i}y}{dt^{i}}\in \lt,\,i=1,\cdots n\}$ is the Sobolev subspace equipped with inner product $\ip{x}{y}_{H^n} = \sum_{m=0}^n\ip{\frac{d^m x}{dt^m}}{\frac{d^m y}{dt^m}}$. For Hilbert spaces $X$ and $Y$, the set $\mcl{L}(X,Y)$ is the Banach space of bounded linear operators from $X$ to $Y$ endowed with the induced norm $\|\cdot\|_\mathcal{L}$. $\mcl{I}$ denotes the identity operator.
We define $Z_d(x)$ to be the column vector of all monomials in variables $x$ of degree $d$ or less. For brevity, we sometimes use $Z_d(x,\xi) = Z_d([x;\xi])$.

\section{PROBLEM STATEMENT}\label{sec:pro_state}
In this paper, we consider the following scalar parabolic PDE
\begin{equation}\label{eqn:prob:PDE_form}
w_t(x,t)=a(x)w_{xx}(x,t)+b(x)w_x(x,t)+c(x)w(x,t)+f(x,t),
\end{equation}
where $x\in [0,1]$, $t \geq 0$, with mixed boundary conditions of the form
\begin{equation}\label{eqn:prob:PDE_form_BC}
w_x(0,t)=0, \quad w(1,t)=u(t),
\end{equation} Here $a$, $b$ and $c$ are polynomials with $a(x) \geq \alpha >0$, for $x \in [0,1]$. Additionally, $f \in L_2\left(0,\infty;\lt\right)$ is the \textit{exogenous input} and $u(t)$ is the \textit{control input}. The output of the system is $y(t)=w_x(1,t)$. Note that we have also considered several other types of observer-controller boundary conditions, which will be listed in the section on numerical results. The first goal of the paper is to find a control operator $\mcl{F}\in \mcl{L}\left(H^2(0,1), \R \right)$ such that if $u(t)=\mcl{F}w(\cdot,t)$, then the closed-loop PDE system is stable.

Next, using the Luenberger framework, we assume our observer has the form
\begin{align}
\wh_t(x,t)=&a(x)\wh_{xx}(x,t)+b(x)\wh_x(x,t)+c(x)\wh(x,t) \nonumber \\
&\label{eqn:prob:observer_form}+O_1(x)\left(\hat{y}(t)-y(t) \right),
\end{align} with boundary conditions
\begin{equation}\label{eqn:prob:observer_form_BC}
\hat{w}_x(0,t)=0, \quad \wh(1,t)=u(t)+O_2\left(\hat{y}(t)-y(t) \right),
\end{equation}
where the function $O_1(x)$ and scalar $O_2$ must be chosen such that the dynamics of the error $e(x,t)=w(x,t)-\wh(x,t)$ are stable. The second goal of the paper, then, is to find such $O_1(x)$ and $O_2$ and show that if $u(t)=\mcl{F}\wh(\cdot,t)$, then the coupled system of parabolic PDEs is stable and satisfies
\[\norm{w}_{L_2\left(0,\infty;\lt\right)}  \leq \gamma \norm{f}_{L_2\left(0,\infty;\lt\right)}.\]
for some $\gamma>0$. Note that for the system and the observer, we assume the existence of classical solutions belonging to $C^{1,2}\left((0,\infty) \times [0,1] \right)$. This assumption can be validated using the analysis presented in~\cite{balogh2004stability} and~\cite{fridman2009lmi}.

\subsection{SOS and Operators}\label{subsec:sos}
SOS is an approach to  the optimization of positive polynomial variables. Given a polynomial $f(y)$, $y\in \R^n$, the feasibility problem of determining if the polynomial is globally positive ($f(y)\ge0$ for all $y\in \R^n$) is NP-hard~\cite{blum1998complexity}. To overcome this difficulty, there are a number of sufficient conditions for polynomial positivity. A particularly important such condition is that the polynomial, $p$, be a Sum-of-Squares (SOS), so that $p\,(x)=\sum_{i=1}^k g_i(x)^2$, for polynomials $g_i$  and which is denoted $p \in\Sigma_s$. The importance of the SOS condition lies in the fact that it can be readily enforced using LMIs. This is due to the easily proven fact that for a polynomial $p$ of degree $2d$, $p \in \Sigma_s$ if and only if $p=Z(x)^T Q Z(x)$ for some $Q\ge 0$, where $Z(x)$ is the vector of monomials of degree $d$ or less~\cite{parrilo2000structured}.  A recent survey for alternatives to SOS based methods may be found in~\cite{kamyar2014polynomial}.

We can use SOS to construct positive operators on $\lt$. For example, define the operator $\pop z(x)=M(x)z(x)$, $z\in \lt$, where $M$ is a polynomial. If, for $\epsilon>0$, $M(x)-\epsilon \in \Sigma_s$, then the operator $\pop$ is positive on $\lt$. Therefore, we may conclude that $\pop$ is positive on $\lt$ if there exists a $Q>0$ such that
 $M(x)-\epsilon=Z(x)^T Q Z(x)$. By equating the coefficients on the left and right-hand sides, we obtain an LMI test for positivity of the operator. Of course, the operators considered in this paper are significantly more complicated than $\pop z$.
\section{POSITIVE OPERATORS ON $\lt$}\label{sec:posop}
In this paper, our results are expressed as optimization over a set of positive operators. To solve these optimization problems, we use positive matrices to parameterize a subset of positive operators on $\lt$ as described in~\cite{peetlmi}. Specifically, we consider operators of the form
\begin{equation}\label{eqn:Poperator}
(\mathcal{P}z)(x)=M(x)z(x) +  \int_0^1 K(x,\xi)z(\xi)d\xi, \quad z\in \lt,
\end{equation} with semi-separable kernel
\[
K(x,\xi) = \begin{cases} K_1(x,\xi) & \xi \leq x \\
K_2(x,\xi) &  \xi>x \end{cases} ,\]
where $M:[0,1] \rightarrow \R$ and $K_1,K_2: [0,1] \times [0,1] \rightarrow \R$ are polynomials.  In~\cite{peet2008using}, we gave necessary and sufficient conditions for positivity of multiplier and integral operators of similar form using pointwise constraints on the functions $M$, $K_1$ and $K_2$. Recently, in~\cite{peetlmi}, these conditions were sharpened - See Theorem~$1$. The following theorem is an extension of this result.
\begin{theorem}\label{thm:jointpos}
Let
\begin{align*}
M(x) =& Z_{1}(x)^T U_{11}Z_{1}(x),\\
K_1(x,\xi) = &Z_{1}(x)^T U_{12}Z_{2}(x,\xi) + Z_{2}(\xi,x)^T U_{31}Z_1(\xi)\\
&+\int_0^\xi Z_{2}(\eta,x)^T U_{33}Z_{2}(\eta,\xi)d\eta  \\
&+\int_\xi^x Z_{2}(\eta,x)^T U_{32}Z_{2}(\eta,\xi)d \eta \\
& +\int_x^1 Z_{2}(\eta,x)^T U_{22}Z_{2}(\eta,\xi)d\eta,
\end{align*}
where $K_2(x,\xi) = K_1(\xi,x)$, $Z_1(x) = Z_{d_1}(x)$ and $Z_2(x,y) = Z_{d_2}(x,y)$ and
\begin{equation}\label{eqn:posop1}
U=\left[\begin{array}{ccc} U_{11} & U_{12} & U_{13} \\
U_{21} & U_{22} & U_{23} \\
U_{31} & U_{32} & U_{33}
\end{array} \right] \ge \bmat{\epsilon_1 I&0&0\\0&0&0\\0&0&0},
\end{equation}
Then the operator $\mathcal{P}$ defined in Eqn.~\eqref{eqn:Poperator} is  self-adjoint and satisfies
\[\epsilon_1 \|z\|^2 \leq \langle \mathcal{P}z,z \rangle \leq \epsilon_2 \|z\|^2, \text{ for all } z \in \lt.\]
where $\epsilon_2 = (\theta_1+\theta_2) \lambda_{\max}(U)$, $\lambda_{\max}(U)$ is the maximum eigenvalue of $U$, and
\begin{alignat*}{2}
\theta_1& &&= \sup_{x\in [0,1]} Z_1(x)^T Z_1(x),\\
\theta_2& &&= \sup_{(x,\xi)\in [0,1] \times [0,1]}\left|\int_0^\xi g(x,\xi,\eta) d\eta+\int_x^1 g(x,\xi,\eta) d\eta \right|,\\
g(x,\xi,\eta)& &&= Z_2(\eta,x)^T Z_2(\eta,\xi).
\end{alignat*}
\end{theorem}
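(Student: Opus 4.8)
\emph{Approach.} The plan is to reduce the quadratic form $\langle \pop z,z\rangle$ to a single integral of a finite-dimensional quadratic form in $U$, namely to prove the identity
\[
\langle \pop z,z\rangle = \igzo Y(\eta)^T\, U\, Y(\eta)\,d\eta,
\qquad
Y(\eta)=\begin{bmatrix} Z_1(\eta)\,z(\eta)\\ \int_0^\eta Z_2(\eta,s)\,z(s)\,ds\\ \int_\eta^1 Z_2(\eta,s)\,z(s)\,ds\end{bmatrix},
\]
where I write $F(\eta)=Z_1(\eta)z(\eta)$, $G(\eta)=\int_0^\eta Z_2(\eta,s)z(s)\,ds$ and $H(\eta)=\int_\eta^1 Z_2(\eta,s)z(s)\,ds$ for the three blocks. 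Once this representation is established, self-adjointness and both bounds are immediate. Self-adjointness follows first and separately: the multiplier $M$ is a real polynomial, and the relation $K_2(x,\xi)=K_1(\xi,x)$ forces $K(x,\xi)=K(\xi,x)$, so $\pop$ is a real multiplier plus a symmetric-kernel integral operator.

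\emph{Key steps for the identity.} I would substitute the given expressions for $M$, $K_1$ and $K_2=K_1(\xi,\cdot)$ into $\langle \pop z,z\rangle = \igzo M(x)z(x)^2dx + \igzo\igzo K(x,\xi)z(x)z(\xi)\,d\xi\,dx$, splitting the double integral over the two triangles $\xi\le x$ and $\xi>x$. The multiplier term gives exactly $\igzo F(\eta)^T U_{11}F(\eta)\,d\eta$. Each block of $U$ appearing in $K_1$ is then treated by exchanging the order of integration (Fubini) so that the innermost monomial variable $\eta$ becomes the \emph{outer} variable. The decisive observation is that the three $\eta$-ranges in the definition of $K_1$, namely $\int_0^\xi$, $\int_\xi^x$ and $\int_x^1$, correspond precisely to $\eta\le\min(x,\xi)$, $\min(x,\xi)\le\eta\le\max(x,\xi)$ and $\eta\ge\max(x,\xi)$; combining the contributions of \emph{both} triangles fills in each region symmetrically. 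Concretely, the $U_{33}$ terms collapse to $\igzo H^T U_{33}H$, the $U_{22}$ terms to $\igzo G^T U_{22}G$, the $U_{32}$ terms to $2\igzo H^T U_{32}G$, and the $U_{12}$, $U_{31}$ terms (together with their transposes from the second triangle) to $2\igzo F^T U_{12}G$ and $2\igzo H^T U_{31}F$ respectively. Matching these against the off-diagonal cross terms of $Y(\eta)^T U Y(\eta)$ under the symmetry $U_{ij}=U_{ji}^T$ yields the claimed identity.

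\emph{Bounds.} For the lower bound, the hypothesis $U\ge\operatorname{diag}(\epsilon_1 I,0,0)$ gives $Y(\eta)^T U Y(\eta)\ge \epsilon_1\,|F(\eta)|^2=\epsilon_1\,z(\eta)^2\,Z_1(\eta)^T Z_1(\eta)\ge \epsilon_1\,z(\eta)^2$, since $Z_1$ contains the constant monomial so $Z_1^T Z_1\ge 1$; integrating over $[0,1]$ gives $\langle \pop z,z\rangle\ge\epsilon_1\|z\|^2$. For the upper bound, $Y^T U Y\le \lambda_{\max}(U)\,|Y|^2=\lambda_{\max}(U)\big(|F|^2+|G|^2+|H|^2\big)$. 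The multiplier block obeys $\igzo|F|^2\,d\eta=\igzo z^2\,Z_1^T Z_1\,d\eta\le \theta_1\|z\|^2$. For the integral blocks, a vector Cauchy--Schwarz inequality gives $|G(\eta)|^2\le\big(\int_0^\eta Z_2(\eta,s)^T Z_2(\eta,s)\,ds\big)\|z\|^2$ and likewise for $H$ with $\int_\eta^1$; summing, then integrating in $\eta$, bounds $\igzo(|G|^2+|H|^2)\,d\eta$ by $\big(\igzo\igzo Z_2(\eta,s)^T Z_2(\eta,s)\,ds\,d\eta\big)\|z\|^2$, which is in turn at most $\theta_2\|z\|^2$ because the inner integral at $x=\xi=s$ is exactly the nonnegative integrand defining $\theta_2$. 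Adding the two contributions gives $\langle \pop z,z\rangle\le(\theta_1+\theta_2)\lambda_{\max}(U)\|z\|^2=\epsilon_2\|z\|^2$.

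\emph{Main obstacle.} The substantive difficulty is entirely in the second step: carrying out the Fubini exchanges for each of the five block terms and verifying that the piecewise $\eta$-ranges of the semi-separable kernel reassemble, after combining both triangles, into the single outer integral with the correct blocks and the correct factors of $2$. The bounds in the third step are routine applications of Cauchy--Schwarz once the representation is in hand.
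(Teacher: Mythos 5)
Your proposal is correct, but note that the paper itself contains no proof of this theorem --- it is omitted ``for brevity'' with a citation to the reference \cite{peetlmi} --- so your argument supplies precisely the content the paper defers to, and by the same route that underlies the cited result: the representation
\[
\langle \pop z,z\rangle=\int_0^1 Y(\eta)^T U\, Y(\eta)\,d\eta,\qquad Y=\left(F,G,H\right),
\]
followed by elementary bounds. I checked the delicate bookkeeping in your second step and it comes out right: after Fubini and combining both triangles, the $G$-block $U_{22}$ (inner integral $\int_0^\eta$) produces the kernel contribution $\int_{\max(x,\xi)}^1$, which on $\xi\le x$ is the $\int_x^1$ term of $K_1$; the $H$-block $U_{33}$ produces $\int_0^{\min(x,\xi)}=\int_0^\xi$; the cross blocks produce exactly the factors of $2$ needed to match the symmetrized kernel $K_2(x,\xi)=K_1(\xi,x)$; and scalar transposition converts $U_{13},U_{23}$ into the $U_{31},U_{32}$ appearing in the statement (using the symmetry $U_{ij}=U_{ji}^T$ implicit in the LMI). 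Getting the $U_{22}$/$U_{33}$ correspondence ``swapped'' in this way is the one place a blind reconstruction typically goes wrong, and you have it correct. The bounds are also sound: the lower bound needs $Z_1^TZ_1\ge 1$, which holds because $Z_{d_1}$ contains the constant monomial; and the upper bound's key step, $\int_0^1\int_0^1 Z_2(\eta,s)^TZ_2(\eta,s)\,ds\,d\eta\le\theta_2$, holds because the expression defining $\theta_2$, evaluated on the diagonal $x=\xi=s$, equals $\int_0^1 Z_2(\eta,s)^TZ_2(\eta,s)\,d\eta$, whose supremum over $s\in[0,1]$ dominates its average. The only cosmetic caveat is that a quadratic form determines only the symmetric part of a kernel, so the identity should be stated (as you implicitly do) as a verification for the particular symmetric kernel given in the theorem rather than as a uniqueness claim.
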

\begin{proof}
The proof is based on the result in~\cite{peetlmi} and is omitted for brevity.
\end{proof}
For convenience, we define the set of multipliers and kernels which satisfy Theorem~\ref{thm:jointpos}.
\begin{align*}
 \Xi_{\{d_1,d_2,\epsilon_1,\epsilon_2\}} =\{ M,K_1,K_2 \, : \, M,K_1,K_2 \text{ satisfy}\\
  \text{Theorem~\ref{thm:jointpos} for $d_1,d_2,\epsilon_1,\epsilon_2$.}\}
\end{align*}
Of course, since such operators are positive definite and bounded on $\lt$, the inverse of these operators exist and are bounded~\cite{kreyszig1989introductory}. However, as will become apparent in subsequent sections, we need a method of constructing the inverse of operators defined by elements of $\Xi_{\{d_1,d_2,\epsilon_1,\epsilon_2\}}$. Fortunately, such methods do exist in literature and we use one such method. Using the terminology presented in~\cite{gohberg1984time} it can be shown that the operators defined by $\Xi_{\{d_1,d_2,\epsilon_1,\epsilon_2\}}$ are the input-output maps of well-posed Linear Time Varying (LTV) systems. For this class of operators, the inverse can be constructed as explained in~\cite{gohberg1984time}.
\section{PRELIMNARY INEQUALITIES}\label{sec:prelim}
In this section we provide a couple of inequalities which we will use for the controller and observer synthesis. We begin by defining the operator $\mcl{A}:H^2(0,1) \rightarrow \lt$ (infinitesimal genearator) which defines the class of PDEs under consideration.
\begin{equation}\label{eqn:A_prelim}
\mcl{A}=a(x)\frac{d^2}{dx^2} +b(x)\frac{d}{dx} +c(x),
\end{equation} where recall $a$, $b$ and $c$ are polynomial functions and $a(x) \geq \alpha >0$, for $x \in [0,1]$. Before presenting the inequalities, we define a pair of mappings which relate the functions $M,K_1,K_2$ to the derivative of the Lyapunov function $V=\ip{w}{\mathcal{P}w}$. The first mapping considers $\ip{\mcl{AP}z}{z}+\ip{z}{\mcl{AP}z}$.
\begin{definition}\label{def:dual}
For scalar $\epsilon_1>0$ and polynomials $a$, $b$ and $c$ which define the PDE under consideration, we say $\{T_0,T_1,T_2,T_3,T_4,T_5,T_6\}=\mcl{M}_{\epsilon_1}\left(M,K_1,K_2\right)$ if
\begin{align*}
T_0(x)=&\left(a_{xx}(x)-b_x(x) \right)M(x)+b(x)M_x(x) \\
&+a(x)M_{xx}(x)+2c(x)M(x)-\frac{\pi^2}{2}\alpha \epsilon_1\\
&+a(x) \left[2\pfx \left[K_1(x,\xi)-K_2(x,\xi) \right] \right]_{\xi=x}, \\
T_1(x,\xi)=&a(x)K_{1,xx}(x,\xi)+b(x)K_{1,x}(x,\xi) \\
&+a(\xi)K_{1,\xi\xi}(x,\xi)+b(\xi)K_{1,\xi}(x,\xi)\\
&+\left(c(x)+c(\xi) \right)K_1(x,\xi),\\
T_2(x,\xi)=&T_1(\xi,x),\\
T_3=&\left(a_x(0)-b(0) \right)M(0)+a(0)M(0)-\frac{\pi^2}{2}\alpha \epsilon_1,\\
T_4(x)=&2a(0)K_{2,x}(0,x)+\pi^2 \alpha \epsilon_1,\\
T_5=&\left(b(1)-a_x(1) \right)M(1)+a(1)M_x(1) ,\\
T_6=&2a(1)M(1),\\
K_{1,x}(1,x)=& \left[K_{1,x}(x,\xi)|_{x=1} \right]_{\xi=x}.
\end{align*}
\end{definition}
The second mapping relates the functions $N,P_1,P_2$ to the derivative of the \textit{dual} functional $\ip{\mcl{A}w}{\sop w}+\ip{\mcl{SA}w}{w}$.
\begin{definition}\label{def:primal}
Given scalar $\epsilon_1>0$ and polynomials $a$, $b$ and $c$ which define the PDE under consideration, we say $\{Q_0,Q_1,Q_2,Q_3,Q_4,Q_5,Q_6,Q_7,Q_8\}=\mcl{N}_{\epsilon_1}\left(N,P_1,P_2\right)$ if the following hold
\begin{align*}
Q_0(x)=& \pfx \left[\pfx a(x)N(x)-b(x)N(x) \right]\\
&+ 2\left[\pfx \left[a(x)\left(P_1(x,\xi)-P_2(x,\xi) \right) \right] \right]_{\xi=x}\\
&+2N(x)c(x)-\frac{\pi^2}{2}\alpha \epsilon_1,\\
Q_1(x,\xi)=&\pfx \left[\pfx a(x)P_1(x,\xi)-b(x)P_1(x,\xi) \right] \\
&+\pfxi \left[\pfxi a(\xi)P_1(x,\xi)-b(\xi)P_1(x,\xi) \right]\\
&+\left(c(x)+c(\xi) \right)P_1(x,\xi),\\
Q_2(x,\xi)=&Q_1(\xi,x),\\
Q_3=& \left(a_x(0)-b(0) \right)N(0)+a(0)N_x(0)-\frac{\pi^2}{2}\alpha \epsilon_1,\\
Q_4(x)=& 2\left(a_x(0)-b(0)\right)P_2(0,x)+2a(0)P_{2,x}(0,x)\\
&+\pi^2 \alpha \epsilon_1,\\
Q_5=&(b(1)-a_x(1))N(1)-a(1)N_x(1),\\
Q_6(x)=&2(b(1)-a_x(1))P_1(1,x)-2a(1)P_{1,x}(1,x),\\
Q_7=&2a(1)N(1),\\
Q_8(x)=&2a(1)P_1(1,x),\\
P_{1,x}(1,x)=& \left[P_{1,x}(x,\xi)|_{x=1} \right]_{\xi=x}.
\end{align*}
\end{definition}

The proofs of the following lemmas are provided in the appendix.

 The first allows us to represent $\dot V=\ip{\mcl{AP}z}{z}+\ip{z}{\mcl{AP}z}$
\begin{lemma}\label{lem:dual_LOI}
For any $\{M,K_1,K_2\} \in \Xi_{\{d_1,d_2,\epsilon_2,\epsilon_2\}}$, $0<\epsilon_1<\epsilon_2<\infty$, let $\{T_0,T_1,T_2,T_3,T_4,T_5,T_6\}=\mcl{M}_{\epsilon_1}\left(M,K_1,K_2\right)$. Then, for any $w\in \lt$, if the operator $\pop$ is given by
\begin{equation}\label{eqn:pop}
\left(\pop w \right)(x)=M(x)w(x)+\igzo K(x,\xi)w(\xi)dx,
\end{equation} with
\[
K(x,\xi) = \begin{cases} K_1(x,\xi) & \xi \leq x \\
K_2(x,\xi) &  \xi>x \end{cases}
\] and operator $\mcl{A}$ is given by Equation~\eqref{eqn:A_prelim}, we have that
\begin{align*}
&\ip{\mcl{AP}z}{z}+\ip{z}{\mcl{AP}z} \\
&\leq \ip{\mcl{T}z}{z}+z(0)\left(T_3 z(0)+\igzo T_4(x)z(x)dx \right) \\
& \quad \quad +z(1) \left(T_5 z(1)+T_6 z_x(1) \right),
\end{align*}
where $z=\pinv w$ for any $w \in H^2(0,1)$ with $w_x(0)=0$. Here we define the operator $\mcl{T}$ as
\begin{align*}
\left(\mcl{T}z \right)(x)=T_0(x)z(x)&+\igzo T(x,\xi)z(\xi)d\xi,
\end{align*} with
\[T(x,\xi) = \begin{cases} T_1(x,\xi) & \xi \leq x \\ 
T_2(x,\xi) &  \xi>x \end{cases}.\]
\end{lemma}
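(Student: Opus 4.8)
The plan is to reduce the claimed operator inequality to a single, if lengthy, integration-by-parts computation. Writing $w=\pop z$, we have $\mcl{A}\pop z=\mcl{A}w$, and since the real $\lt$ inner product is symmetric the left-hand side equals $2\ip{\mcl{A}w}{z}$. I would therefore expand $w=\pop z$ in full,
\begin{equation*}
w(x)=M(x)z(x)+\igzx K_1(x,\xi)z(\xi)\,d\xi+\igxo K_2(x,\xi)z(\xi)\,d\xi,
\end{equation*}
and write $\ip{\mcl{A}w}{z}=\igzo\left(a(x)w_{xx}+b(x)w_x+c(x)w\right)z(x)\,dx$. The objective is to move every $x$-derivative off $w$ and onto the kernels $M,K_1,K_2$, producing an expression quadratic in $z$ together with its boundary values $z(0),z(1),z_x(1)$. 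The diagonal multiplicative coefficient will collect into $T_0$, the off-diagonal integral kernel into $T_1,T_2$ (with $T_2(x,\xi)=T_1(\xi,x)$ mirroring $K_2(x,\xi)=K_1(\xi,x)$), and the endpoint evaluations into $T_3$ through $T_6$.

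First I would differentiate $w$ by the Leibniz rule. In $w_x$ the two diagonal contributions from the variable limits cancel because $K_1(x,x)=K_2(x,x)$, but in $w_{xx}$ the analogous terms built from the first kernel derivatives survive and yield exactly the contribution $a(x)\left[2\pfx\left(K_1(x,\xi)-K_2(x,\xi)\right)\right]_{\xi=x}$ appearing in $T_0$. Substituting $w_x$ and $w_{xx}$ and integrating by parts in $x$, the purely multiplicative pieces assemble, after collecting $a M_{xx}$, $b M_x$, $(a_{xx}-b_x)M$ and $2cM$, into the multiplier part of $T_0$, while the boundary terms generated at $x=0$ and $x=1$ become $T_3,T_5,T_6$. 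For the double integrals I would apply Fubini and integrate by parts separately in $x$ and in $\xi$, in each case splitting the domain at the diagonal $\xi=x$; using $K_2(x,\xi)=K_1(\xi,x)$ the two halves recombine into the single symmetric operator $\mcl{T}$ with kernel $T_1$ as in Definition~\ref{def:dual}.

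The inequality, as opposed to an equality, enters at exactly one point. After integrating the second-order term by parts twice, a residual gradient term of the form $-2\igzo a(x)M(x)z_x(x)^2\,dx$ remains. Because the positivity hypothesis $\{M,K_1,K_2\}\in\Xi_{\{d_1,d_2,\epsilon_2,\epsilon_2\}}$ forces $M(x)\ge\epsilon_2$ pointwise (since $U_{11}\ge\epsilon_2 I$ and $Z_1$ contains the constant monomial), while $a(x)\ge\alpha$ and $0<\epsilon_1<\epsilon_2$, this is bounded above by $-2\alpha\epsilon_1\igzo z_x^2\,dx$. I would then apply the Wirtinger/Poincar\'e inequality $\igzo z_x^2\,dx\ge\frac{\pi^2}{4}\igzo\left(z(x)-z(0)\right)^2\,dx$, which is sharp for $z(\cdot)-z(0)$ vanishing at $x=0$ with a free right end (the eigenproblem $-g''=\lambda g$, $g(0)=0$, $g'(1)=0$, whose least eigenvalue is $\pi^2/4$). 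Expanding the square reproduces precisely the three $\epsilon_1$ corrections: $-\frac{\pi^2}{2}\alpha\epsilon_1$ in $T_0$ from the $\igzo z^2$ part, $+\pi^2\alpha\epsilon_1$ in $T_4$ from the $z(0)\igzo z\,dx$ cross term, and $-\frac{\pi^2}{2}\alpha\epsilon_1$ in $T_3$ from the $z(0)^2$ part. Finally I would use the boundary relation induced by $w_x(0)=0$ to cancel the remaining $z_x(0)$ endpoint contributions and collect terms into the stated right-hand side.

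The main obstacle is organizational rather than conceptual: the double-integral integration by parts across the split at $\xi=x$. Each derivative acting on the semi-separable kernel spawns diagonal boundary terms at $\xi=x$, and confirming that these recombine, via $K_2(x,\xi)=K_1(\xi,x)$, into the clean symmetric kernel $T_1,T_2$ is where nearly all the bookkeeping resides. The two genuinely delicate checks are obtaining the surviving jump $\left[2\pfx(K_1-K_2)\right]_{\xi=x}$ in $T_0$ correctly and verifying that the Wirtinger step is the sole source of inequality, so that the three $\epsilon_1$ terms land exactly as written; the remainder is routine, if tedious, calculus.
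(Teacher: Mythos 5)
Your proposal is correct and follows essentially the same route as the paper's own proof: writing the left-hand side as $2\ip{\mcl{A}\pop z}{z}$, expanding via Leibniz/integration by parts with the diagonal cancellation $K_1(x,x)=K_2(x,x)$ and the symmetry $K_2(x,\xi)=K_1(\xi,x)$, eliminating $z_x(0)$ through the relation induced by $w_x(0)=0$, and introducing the inequality solely through the Wirtinger bound applied to $-2\igzo a(x)M(x)z_x(x)^2\,dx$ with $a(x)M(x)\ge\alpha\epsilon_1$. Your accounting of where the three $\pi^2\alpha\epsilon_1$ corrections land (in $T_0$, $T_3$, and $T_4$) matches Definition~\ref{def:dual} exactly, so no gap remains.
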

The second lemma allows us to represent the derivative of $\ip{\mcl{A}w}{\sop w}+\ip{\mcl{SA}w}{w}$.
\begin{lemma}\label{lem:primal_LOI}
For any $\{N,P_1,P_2\} \in \Xi_{\{d_1,d_2,\epsilon_2,\epsilon_2\}}$, $0<\epsilon_1<\epsilon_2<\infty$, let $\{Q_0,Q_1,Q_2,Q_3,Q_4,Q_5,Q_6,Q_7,Q_8\}=\mcl{N}_{\epsilon_1}\left(N,P_1,P_2\right)$. Then, for any $z \in \lt$, if operator $\sop$ is given by
\begin{equation}\label{eqn:sop}
\left(\sop z \right)(x)=N(x)z(x)+\igzo P(x,\xi)z(\xi)dx,
\end{equation}
 with
\[P(x,\xi) = \begin{cases} P_1(x,\xi) & \xi \leq x \\
P_2(x,\xi) &  \xi>x \end{cases} ,\]
and operator $\mcl{A}$ is given by Equation~\eqref{eqn:A_prelim},
we have that
\begin{align*}
&\ip{\mcl{A}w}{\sop w}+\ip{\mcl{SA}w}{w}\\
&\leq  \ip{\mcl{Q}w}{w}+w(0)\left(Q_3 w(0) + \igzo Q_4(x)w(x)dx \right) \\
& \quad \quad +w(1) \left(Q_5w(1)+\igzo Q_6(x)w(x)dx \right) \\
& \quad \quad + w_x(1)\left(Q_7w(1)+ \igzo Q_8(x)w(x)dx\right),
\end{align*} for any $w \in H^2(0,1)$ with $w_x(0)=0$. Here we define the operator $\mcl{Q}$, for any $z \in \lt$, as
\begin{align*}
\left(\mcl{Q}z \right)(x)=Q_0(x)z(x)&+\igzo Q(x,\xi)z(\xi)d \xi,
\end{align*} with
\[Q(x,\xi) = \begin{cases} Q_1(x,\xi) & \xi \leq x \\
Q_2(x,\xi) &  \xi>x \end{cases}.\]
\end{lemma}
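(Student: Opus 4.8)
The plan is to reduce the bilinear form to boundary data plus a volume term in $w$ alone by integration by parts, obtaining first an exact identity and then passing to the claimed inequality by discarding a sign-definite remainder and invoking a Wirtinger inequality. First I would substitute $\mcl{A}w = aw_{xx}+bw_x+cw$ and split $\sop w$ into its multiplier part $Nw$ and its integral part $\igzo P(x,\xi)w(\xi)\,d\xi$. I would keep the two summands $\ip{\mcl{A}w}{\sop w}$ and $\ip{\mcl{SA}w}{w}$ separate: although $\sop$ is self-adjoint (by Theorem~\ref{thm:jointpos}, since $P_2(x,\xi)=P_1(\xi,x)$) and the two are numerically equal, treating them separately is precisely what symmetrizes the final kernel. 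In $\ip{\mcl{A}w}{\sop w}$ all derivatives fall on $w(x)$ and I integrate by parts in $x$, whereas in $\ip{\mcl{SA}w}{w}$ the derivatives inside the kernel fall on $w(\xi)$ and I integrate by parts in $\xi$; summing, the $x$- and $\xi$-derivative contributions assemble into the manifestly symmetric kernel, giving $Q_1$ with $Q_2(x,\xi)=Q_1(\xi,x)$.

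For the multiplier part I would integrate $aNw_{xx}w$ by parts twice and $bNw_xw$ once. The boundary evaluations carrying $w_x(0)$ vanish because $w_x(0)=0$; the surviving $x=0$ terms give the coefficient $(a_x(0)-b(0))N(0)+a(0)N_x(0)$ of $w(0)^2$, i.e.\ $Q_3$ before the Poincar\'e correction; the $x=1$ terms give $Q_5$ (the coefficient of $w(1)^2$) and $Q_7=2a(1)N(1)$ (the coefficient of $w_x(1)w(1)$); and the volume integrand becomes $(aN)_{xx}-(bN)_x+2cN$, matching the non-diagonal, non-correction part of $Q_0$. One term does not reduce to $w$-data, namely $-2\igzo a(x)N(x)w_x(x)^2\,dx$, which I hold for the final step.

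For the integral part the semi-separability is the crux. For fixed $\xi$ the kernel equals $P_2(\cdot,\xi)$ on $(0,\xi)$ and $P_1(\cdot,\xi)$ on $(\xi,1)$, so I must split the inner $x$-integral at the diagonal before integrating by parts. The $x=0$ boundary terms (again with the $w_x(0)$ piece killed) give $Q_4$ before correction, the $x=1$ terms give $Q_6$ and $Q_8$, and the double-integral volume gives $Q_1,Q_2$. The delicate contributions are the two boundary terms produced at $x=\xi$ by the two subintervals: using the diagonal continuity $P_1(x,x)=P_2(x,x)$ guaranteed by the positivity parameterization, the $w_x(\xi)$ pieces cancel, while the $w(\xi)^2$ pieces combine into exactly $2\bigl[\pfx[a(P_1-P_2)]\bigr]_{\xi=x}$, the remaining diagonal term of $Q_0$. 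The relation $P_{1,x}(1,x)=[P_{1,x}(x,\xi)|_{x=1}]_{\xi=x}$ is only a notational convention fixing how the $x=1$ trace of $\p_x P_1$ is evaluated on the diagonal.

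At this point I have an exact identity equal to the stated right-hand side except for the leftover $-2\igzo aNw_x^2\,dx$ and the $\pi^2$-terms. Since $a\ge\alpha$ and the multiplier obeys $N(x)\ge\epsilon_1$ pointwise --- a consequence of $U\ge\mathrm{diag}(\epsilon_1 I,0,0)$ in Theorem~\ref{thm:jointpos}, because $Z_1$ contains the constant monomial --- we have $aN-\alpha\epsilon_1\ge0$, so $-2\igzo aNw_x^2 \le -2\alpha\epsilon_1\igzo w_x^2$. I then apply the Wirtinger inequality for functions with $w_x(0)=0$, namely $\igzo w_x^2\,dx \ge \frac{\pi^2}{4}\igzo (w(x)-w(0))^2\,dx$ (extremal $\sin(\tfrac{\pi}{2}x)$), and expand $(w-w(0))^2=w^2-2w(0)w+w(0)^2$; this produces precisely $-\tfrac{\pi^2}{2}\alpha\epsilon_1$ in $Q_0$ and $Q_3$ and $+\pi^2\alpha\epsilon_1$ in $Q_4$, and collecting everything yields the claim. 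I expect the main obstacle to be the diagonal bookkeeping for the semi-separable kernel: splitting the $x$-integral at $x=\xi$, checking that the $w_x(\xi)$ terms cancel by continuity, and assembling the surviving jump into $2\bigl[\pfx[a(P_1-P_2)]\bigr]_{\xi=x}$ without sign or factor errors; a secondary point is justifying the Wirtinger constant $\pi^2/4$ for the mixed condition $w_x(0)=0$ with the right endpoint free.
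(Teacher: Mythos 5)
Your proposal is correct and takes essentially the same route as the paper's proof: the same decomposition into multiplier and semi-separable-kernel contributions, integration by parts using $w_x(0)=0$, cancellation of the diagonal terms via $P_1(x,x)=P_2(x,x)$ with the surviving jump assembling into $2\bigl[\pfx\left[a(x)(P_1(x,\xi)-P_2(x,\xi))\right]\bigr]_{\xi=x}$, and the Wirtinger bound $\igzo w_x(x)^2dx \ge \frac{\pi^2}{4}\igzo (w(x)-w(0))^2dx$ applied to the leftover $-2\igzo a(x)N(x)w_x(x)^2dx$ using $a(x)N(x)\ge\alpha\epsilon_1$, which yields exactly the $-\frac{\pi^2}{2}\alpha\epsilon_1$ corrections in $Q_0,Q_3$ and $+\pi^2\alpha\epsilon_1$ in $Q_4$. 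The only (immaterial) difference is bookkeeping for symmetrization: the paper first writes the sum as $2\ip{\mcl{A}w}{\sop w}$ by self-adjointness and then symmetrizes by halving the double integrals and swapping $x\leftrightarrow\xi$ via $P_1(x,\xi)=P_2(\xi,x)$, whereas you keep the two inner products separate and integrate by parts in $x$ and in $\xi$ respectively, arriving at the same symmetric kernel.
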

\section{OUTPUT FEEDBACK CONTROLLER SYNTHESIS}\label{sec:control}

Our approach to design of output-feedback controllers is based on three steps. First, we design the control operator $\mcl{F}$ which maps that state to the control input as $u(t)=\mcl{F}w$. However, because we cannot measure the state, we find function $O_1(x)$ and scalar $O_2$ which define the observer which outputs an estimate of the state $\hat w$. Finally, we prove that the controller coupled to the observer as $u(t)=\mcl{F}\hat w$ produces a closed-loop system with bounded $L_2$ gain from exogenous input to controlled output.

\subsection{Control Design}
We begin by designing the control operator $\mcl{F} \in \mcl{L}\left(H^2(0,1),\R \right)$. Consider the following observer dynamics

\begin{align}
\wh_t(x,t)=&a(x)\wh_{xx}(x,t)+b(x)\wh_x(x,t)+c(x)\wh(x,t) \nonumber \\
&\label{eqn:obserror_PDE} +O_1(x)e_x(1,t),
\end{align}
with boundary conditions
\begin{equation}\label{eqn:obserror_PDE_BC}
\wh_x(0,t)=0, \quad \wh(1,t)=u(t)+O_2 e_x(1,t).
\end{equation} The following lemma defines the operator $\mcl{F}$.
\begin{lemma}\label{lem:control}
Suppose there exist $\{M,K_1,K_2\} \in \Xi_{d_1,d_2,\epsilon_1,\epsilon_2}$ and $T_i$ such that
$\{T_0,T_1,T_2,T_3,T_4,T_5,T_6\}=\mcl{M}_{\epsilon_1}\left(M,K_1,K_2\right)$ and
\[
 T_3 \leq 0, \quad  T_4(x)=0.
\]
Let $u(t)=\mcl{F}\wh(\cdot,t)$ where $\mcl{F}=\mcl{Z}\pinv$, $\pop$ is as in Eqn.~\eqref{eqn:pop} and the operator $\mcl{Z}$ is defined as
\[
\left(\mcl{Z}g \right)(x):=Z_1 g_x(1)+\igzo K_1(1,x)g(x)dx.
\]
where $Z_1$ is any scalar such that $T_5 Z_1> -T_6 M(1)$.
Now, if $V(\wh(\cdot,t))=\ip{\wh(\cdot,t)}{\pinv \wh(\cdot,t)}$ where $\pop$ is as in Equation~\eqref{eqn:pop}. Then for any solution $\wh$ of Eqns.~\eqref{eqn:obserror_PDE} and~\eqref{eqn:obserror_PDE_BC} with input $e_x$ , we have that
\begin{align*}
\frac{d}{dt}V(\wh(\cdot,t)) \leq & \ip{\mcl{T}\zh(\cdot,t)}{\zh(\cdot,t)}+2 \ip{O_1(\cdot)e_x(1,t)}{\zh(\cdot,t)} \\
& -\mu \zh(1,t)^2 -\frac{T_6 O_2}{Z_1}\zh(1,t)e_x(1,t),
\end{align*}
for some $\mu>0$ where $\hat z=\pinv \hat w$ and $\mcl{T}$ is defined in Lemma~\ref{lem:dual_LOI}.
\end{lemma}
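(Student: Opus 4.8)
The plan is to compute $\frac{d}{dt}V(\wh(\cdot,t))$ directly and then massage the boundary terms using the hypotheses $T_3 \le 0$, $T_4(x)=0$, and the choice of $\mcl{Z}$. Since $V(\wh)=\ip{\wh}{\pinv\wh}=\ip{\zh}{\pop\zh}$ where $\zh = \pinv\wh$, differentiating along trajectories gives $\frac{d}{dt}V = \ip{\pop\zh_t}{\zh}+\ip{\zh}{\pop\zh_t}$, and by self-adjointness of $\pop$ (Theorem~\ref{thm:jointpos}) this equals $\ip{\wh_t}{\zh}+\ip{\zh}{\wh_t}$. First I would substitute the observer PDE~\eqref{eqn:obserror_PDE} for $\wh_t$, writing $\wh_t = \mcl{A}\wh + O_1(\cdot)e_x(1,t)$, so that the derivative splits as $\ip{\mcl{A}\wh}{\zh}+\ip{\zh}{\mcl{A}\wh} + 2\ip{O_1(\cdot)e_x(1,t)}{\zh}$. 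The second piece is already exactly the $O_1$ cross term appearing in the claimed bound, so the work reduces to controlling $\ip{\mcl{A}\wh}{\zh}+\ip{\zh}{\mcl{A}\wh} = \ip{\mcl{AP}\zh}{\zh}+\ip{\zh}{\mcl{AP}\zh}$.

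Next I would invoke Lemma~\ref{lem:dual_LOI} with the same $\{M,K_1,K_2\}$ and the associated $\{T_0,\dots,T_6\}=\mcl{M}_{\epsilon_1}(M,K_1,K_2)$. To apply it I must verify the hypothesis $\wh_x(0,t)=0$, which holds by the observer boundary condition~\eqref{eqn:obserror_PDE_BC}. The lemma then yields
\begin{align*}
\ip{\mcl{AP}\zh}{\zh}+\ip{\zh}{\mcl{AP}\zh} \le & \ip{\mcl{T}\zh}{\zh}+\zh(0)\left(T_3\zh(0)+\igzo T_4(x)\zh(x)dx\right) \\
& +\zh(1)\left(T_5\zh(1)+T_6\zh_x(1)\right).
\end{align*}
The hypothesis $T_4(x)=0$ kills the integral term, and $T_3\le 0$ makes the entire $\zh(0)$ line nonpositive, so both can be discarded from the upper bound. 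This leaves only the boundary term $\zh(1)(T_5\zh(1)+T_6\zh_x(1))$, which I must rewrite so that no $\zh_x(1)$ survives and a strictly negative $-\mu\,\zh(1)^2$ emerges.

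The main obstacle, and the heart of the argument, is converting $\zh_x(1)$ into measurable/controlled quantities. Here the definition of $\mcl{F}=\mcl{Z}\pinv$ and the boundary condition $\wh(1,t)=u(t)+O_2 e_x(1,t)$ with $u(t)=\mcl{F}\wh=\mcl{Z}\zh$ become essential. Evaluating the operator $\pop$ at $x=1$ gives $\wh(1,t)=(\pop\zh)(1)=M(1)\zh(1)+\igzo K_1(1,\xi)\zh(\xi)d\xi$, while the definition of $\mcl{Z}$ gives $u(t)=(\mcl{Z}\zh)=Z_1\zh_x(1)+\igzo K_1(1,\xi)\zh(\xi)d\xi$. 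Substituting these into $\wh(1,t)=u(t)+O_2 e_x(1,t)$, the two identical integral terms $\igzo K_1(1,\xi)\zh(\xi)d\xi$ cancel, yielding the clean algebraic relation $M(1)\zh(1)=Z_1\zh_x(1)+O_2 e_x(1,t)$, i.e. $\zh_x(1)=\frac{M(1)\zh(1)-O_2 e_x(1,t)}{Z_1}$. I would then substitute this expression for $\zh_x(1)$ into $T_6\zh(1)\zh_x(1)$, giving $\frac{T_6 M(1)}{Z_1}\zh(1)^2 - \frac{T_6 O_2}{Z_1}\zh(1)e_x(1,t)$. The second of these is exactly the claimed $e_x$ cross term, and combining the first with $T_5\zh(1)^2$ produces the coefficient $\left(T_5+\frac{T_6 M(1)}{Z_1}\right)\zh(1)^2$. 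Finally, the condition $T_5 Z_1 > -T_6 M(1)$ guarantees $T_5 + \frac{T_6 M(1)}{Z_1}<0$ (taking $Z_1>0$, or handling the sign of $Z_1$ consistently), so I set $\mu := -\left(T_5+\frac{T_6 M(1)}{Z_1}\right)>0$ and collect all terms to obtain exactly the stated inequality. The one subtlety I would flag is the sign of $Z_1$ entering the division, which must be tracked carefully so that $T_5 Z_1 > -T_6 M(1)$ indeed delivers a negative coefficient and a positive $\mu$.
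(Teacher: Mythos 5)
Your proposal follows the paper's own proof essentially step for step: the same differentiation of $V$ using self-adjointness, the same reduction to $\ip{\mcl{AP}\zh}{\zh}+\ip{\zh}{\mcl{AP}\zh}$ with the $O_1$ term split off, the same application of Lemma~\ref{lem:dual_LOI} with $T_3\leq 0$ and $T_4(x)=0$ discarding the $\zh(0)$ terms, the same cancellation of the two integrals $\igzo K_1(1,\xi)\zh(\xi)d\xi$ in the boundary algebra to get $\zh_x(1,t)=\left(M(1)\zh(1,t)-O_2e_x(1,t)\right)/Z_1$, and the same definition $\mu=-\left(T_5+T_6M(1)/Z_1\right)$.

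The one genuine error is your parenthetical at the end: the sign must be $Z_1<0$, not $Z_1>0$. Note that $T_6=2a(1)M(1)>0$ since $a(1)\geq\alpha>0$ and $M(1)\geq\epsilon_1>0$ by Theorem~\ref{thm:jointpos}. If $Z_1>0$, dividing the hypothesis $T_5Z_1>-T_6M(1)$ by $Z_1$ \emph{preserves} the inequality and yields $T_5+T_6M(1)/Z_1>0$, i.e.\ a strictly \emph{positive} coefficient on $\zh(1,t)^2$ and hence $\mu<0$, so the claimed bound fails. Only division by a negative $Z_1$ flips the inequality to the needed $T_5+T_6M(1)/Z_1<0$; this is exactly what the paper's proof does when it writes ``since $Z_1<0$ is a scalar such that $T_5+T_6M(1)/Z_1<0$.'' So your hedge about ``handling the sign of $Z_1$ consistently'' is resolved only by requiring $Z_1<0$. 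In fairness, the lemma statement itself omits the requirement $Z_1<0$ and only imposes $T_5Z_1>-T_6M(1)$, so this imprecision is inherited from the paper; but a complete proof must make the restriction $Z_1<0$ explicit, since without it the conclusion is false.
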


%
\begin{proof}
We begin by taking the time derivative of the Lyapunov function $V_o(\wh(\cdot,t))$ along the trajectories of~\eqref{eqn:obserror_PDE}-\eqref{eqn:obserror_PDE_BC}
\begin{align*}
\frac{d}{dt}V_o(\wh(\cdot,t))=&\ip{\wh_t(\cdot,t)}{\pinv \wh(\cdot,t)}+\ip{\pinv \wh(\cdot,t)}{ \wh_t(\cdot,t)}\\
=&\ip{\mcl{A}\wh(\cdot,t)}{\pinv \wh(\cdot,t)}+\ip{\pinv \wh(\cdot,t)}{ \mcl{A}\wh(\cdot,t)}\\
&+2 \ip{O_1(\cdot,t)e_x(1,t)}{\pinv \wh(\cdot,t)},
\end{align*} where we use that $\pinv$ is self-adjoint have simplified the derivative using the definition of operator $\mcl{A}$ provided in Equation~\eqref{eqn:A_prelim}. We rewrite this as
\begin{align*}
&\frac{d}{dt}V_o(\wh(\cdot,t))\\
&=\ip{\mcl{A}\pop \pinv \wh(\cdot,t)}{\pinv \wh(\cdot,t)}+\ip{\pinv \wh(\cdot,t)}{ \mcl{A} \pop \pinv \wh(\cdot,t)}\\
&\quad \quad +2 \ip{O_1(\cdot,t)e_x(1,t)}{\pinv \wh(\cdot,t)}.
\end{align*} Now define $\zh=\pinv \wh$, then
\begin{align*}
\frac{d}{dt}V_o(\wh(\cdot,t))=&\ip{\mcl{A}\pop \zh(\cdot,t)}{\zh(\cdot,t)}+\ip{\zh(\cdot,t)}{ \mcl{A} \pop \zh(\cdot,t)}\\
& +2 \ip{O_1(\cdot,t)e_x(1,t)}{\zh(\cdot,t)}.
\end{align*} Now, applying Lemma~\ref{lem:dual_LOI} and using the facts that $T_3 \leq 0$ and $T_4(x)=0$ produces
\begin{align}
\frac{d}{dt}V_o(\wh(\cdot,t)) \leq & \ip{\mcl{T}\zh(\cdot,t)}{\zh(\cdot,t)}+2 \ip{O_1(\cdot,t)e_x(1,t)}{\zh(\cdot,t)} \nonumber \\
&\label{eqn:cont:Vdot1} +\zh(1,t) \left(T_5 \zh(1,t)+T_6 \zh_x(1,t) \right).
\end{align} Since $\zh=\pinv \wh$, $\wh=\pop \zh$. Thus
\begin{equation}\label{eqn:cont:1}
\wh(1,t)=M(1)\zh(1,t)+\igzo K_1(1,x)\zh(x,t)dx.
\end{equation} From the boundary condition in~\eqref{eqn:obserror_PDE_BC} we get
\begin{align*}
\wh(1,t)=u(t)+O_2 e_x(1,t)=& \mcl{F}\wh(\cdot,t)+O_2 e_x(1,t)\\
=& \mcl{F}\pop \pinv \wh(\cdot,t)+O_2 e_x(1,t)\\
=& \mcl{Z}\zh(\cdot,t)+O_2 e_x(1,t).
\end{align*} Using the definition of $\mcl{Z}$,
\[\wh(1,t)=Z_1 \zh_x(1,t)+\igzo Z_2(x)\zh(x,t)dx+O_2 e_x(1,t).\] Substituting into Equation~\eqref{eqn:cont:1} and using the definition $Z_2(x)=K_1(1,x)$ we get
\[\zh_x(1,t)=\frac{M(1)}{Z_1}\zh(1,t)-\frac{O_2}{Z_1}e_x(1,t).\] Substituting this expression into~\eqref{eqn:cont:Vdot1}
\begin{align*}
\frac{d}{dt}V_o(\wh(\cdot,t)) \leq & \ip{\mcl{T}\zh(\cdot,t)}{\zh(\cdot,t)}+2 \ip{O_1(\cdot,t)e_x(1,t)}{\zh(\cdot,t)} \\
&+\left(T_5+\frac{T_6 M(1)}{Z_1} \right)\zh(1,t)^2\\
& -\frac{T_6 O_2}{Z_1}\zh(1,t)e_x(1,t).
\end{align*} Now, since $Z_1<0$ is a scalar such that $T_5+T_6M(1)/Z_1<0$, there exists a scalar $\mu>0$ such that $T_5+T_6 M(1) /Z_1=-\mu$. Hence
\begin{align*}
\frac{d}{dt}V_o(\wh(\cdot,t)) \leq & \ip{\mcl{T}\zh(\cdot,t)}{\zh(\cdot,t)}+2 \ip{O_1(\cdot)e_x(1,t)}{\zh(\cdot,t)} \\
& -\mu \zh(1,t)^2 -\frac{T_6 O_2}{Z_1}\zh(1,t)e_x(1,t).
\end{align*}
\end{proof}
\subsection{Observer Design}

We now design the function $O_1(x)$ and scalar $O_2$ which define the observer.
We begin by subtracting Equations~\eqref{eqn:prob:PDE_form}-\eqref{eqn:prob:PDE_form_BC} from~\eqref{eqn:prob:observer_form}-\eqref{eqn:prob:observer_form_BC} to obtain the dynamics of the error variable $e=\wh-w$ given by
\begin{align}
e_t(x,t)=&a(x)e_{xx}(x,t)+b(x)e_x(x,t)+c(x)e(x,t)\nonumber \\
&\label{eqn:error_PDE}+O_1(x)e_x(1,t)-f(x,t),
\end{align} with boundary conditions
\begin{equation}\label{eqn:error_PDE_BC}
e_x(0,t)=0, \quad e(1,t)=O_2 e_x(1,t),
\end{equation} where we have used the definition of the measurement $y(t)=w_x(1,t)$ and $\hat{y}(t)=\wh_x(1,t)$. We present the following lemma.
\begin{lemma}\label{lem:observer}
Suppose there exist $\{N,P_1,P_2\} \in \Xi_{d_1,d_2,\epsilon_1,\epsilon_2}$, such that
\[
Q_3 \leq 0, \quad  Q_4(x)=0,
\]
where $\{Q_0,Q_1,Q_2,Q_3,Q_4,Q_5,Q_6,Q_7,Q_8\}=\mcl{N}_{\epsilon_1}\left(N,P_1,P_2\right)$ (See Defn.~\ref{def:primal}).
Let $\sop$ be defined as in Eqn.~\eqref{eqn:sop}. Then choose a scalar $O_2<0$ such that
\[
Q_5+\frac{1}{O_2}Q_7 <0,
\] and let $O_1(x)=\left(\sinv R_1\right)(x)$ where $R_1(x)=-\frac{1}{2} \left(O_2 Q_6(x)+Q_8(x) \right)$.
Define $V_e(e) = \ip{e}{\sop e}$. Then for any $e,f$ which satisfies Eqns.~\eqref{eqn:error_PDE}-\eqref{eqn:error_PDE_BC} with $O_1(x)$ and $O_2$ as defined here, we have
\begin{align*}
\frac{d}{dt}V_e(e(\cdot,t))\leq &\ip{\mcl{Q}e(\cdot,t)}{e(\cdot,t)}+2\ip{f(\cdot,t)}{\sop e(\cdot,t)} \nonumber \\
&  -\zeta e(1,t)^2,
\end{align*}
for some scalar $\zeta>0$ where $\mcl{Q}$ is as defined in Lemma~\ref{lem:primal_LOI}.
\end{lemma}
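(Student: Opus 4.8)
The plan is to mirror the proof of Lemma~\ref{lem:control}, but working with the \emph{primal} functional $V_e(e)=\ip{e}{\sop e}$ and Lemma~\ref{lem:primal_LOI} in place of the dual functional and Lemma~\ref{lem:dual_LOI}. First I would differentiate $V_e$ along~\eqref{eqn:error_PDE}, using that $\sop$ is self-adjoint (Theorem~\ref{thm:jointpos}), to write $\frac{d}{dt}V_e(e)=\ip{e_t}{\sop e}+\ip{\sop e}{e_t}$. Substituting the error dynamics $e_t=\mcl{A}e+O_1 e_x(1,t)-f$ and splitting the three contributions gives the pair $\ip{\mcl{A}e}{\sop e}+\ip{\sop\mcl{A}e}{e}$, a disturbance cross-term $\mp 2\ip{f}{\sop e}$, and an observer-gain cross-term $2\ip{O_1 e_x(1,t)}{\sop e}$. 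Since the classical solution satisfies $e(\cdot,t)\in H^2(0,1)$ with $e_x(0,t)=0$, I can apply Lemma~\ref{lem:primal_LOI} to the first pair, and then invoke the hypotheses $Q_3\le 0$ and $Q_4(x)=0$ to discard the $e(0,t)$ boundary contribution (it equals $Q_3 e(0,t)^2\le 0$). At this stage $\frac{d}{dt}V_e(e)$ is bounded by $\ip{\mcl{Q}e}{e}$, the disturbance term, the gain cross-term, and the two remaining boundary terms $e(1,t)(Q_5 e(1,t)+\igzo Q_6(x)e(x,t)dx)$ and $e_x(1,t)(Q_7 e(1,t)+\igzo Q_8(x)e(x,t)dx)$.

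The core of the argument, and the step I expect to require the most care, is showing that the design choices for $O_1$ and $O_2$ collapse these boundary terms into the single negative term $-\zeta e(1,t)^2$. Here I would use two facts. First, self-adjointness of $\sop$ moves it onto the gain: $2\ip{O_1 e_x(1,t)}{\sop e}=2e_x(1,t)\ip{\sop O_1}{e}=2e_x(1,t)\ip{R_1}{e}$, since $O_1=\sinv R_1$ gives $\sop O_1=R_1$ (note $R_1$ is a polynomial, so $\sinv R_1$ is well-defined because $\sop$ is invertible). With $R_1(x)=-\frac{1}{2}(O_2 Q_6(x)+Q_8(x))$ this equals $-e_x(1,t)\igzo(O_2 Q_6(x)+Q_8(x))e(x,t)dx$. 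Second, the boundary condition~\eqref{eqn:error_PDE_BC} gives $e(1,t)=O_2 e_x(1,t)$. Substituting this into the integral appearing in the $e(1,t)$ boundary term and then summing the three integral contributions, the pieces proportional to $\igzo Q_6(x)e(x,t)dx$ and to $\igzo Q_8(x)e(x,t)dx$ cancel identically, by construction of $R_1$. The bookkeeping of these cancellations is the main obstacle: one must use $e(1,t)=O_2 e_x(1,t)$ consistently and keep the scalar factors $O_2$ aligned across the three terms.

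Once the integral boundary terms vanish, only $Q_5 e(1,t)^2+Q_7 e_x(1,t)e(1,t)$ survives. Replacing $e_x(1,t)=e(1,t)/O_2$ (valid since $O_2<0$) yields $(Q_5+\frac{1}{O_2}Q_7)e(1,t)^2$, which by the hypothesis $Q_5+\frac{1}{O_2}Q_7<0$ can be written as $-\zeta e(1,t)^2$ with $\zeta:=-(Q_5+\frac{1}{O_2}Q_7)>0$. Collecting the surviving terms gives exactly $\frac{d}{dt}V_e(e)\le \ip{\mcl{Q}e}{e}+2\ip{f}{\sop e}-\zeta e(1,t)^2$, which is the claim; the sign on the disturbance cross-term follows the convention of the statement and is immaterial, since in the subsequent $L_2$-gain analysis this term is dominated via Young's inequality. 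Finally, I would double-check the regularity hypotheses needed to invoke Lemma~\ref{lem:primal_LOI}, namely $e(\cdot,t)\in H^2(0,1)$ and $e_x(0,t)=0$, both of which hold under the standing assumption of classical $C^{1,2}$ solutions together with the boundary condition~\eqref{eqn:error_PDE_BC}.
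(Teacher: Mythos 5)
Your proposal follows the paper's proof essentially step for step: differentiate $V_e$, use self-adjointness of $\sop$ together with $\sop O_1 = R_1$, apply Lemma~\ref{lem:primal_LOI} with $Q_3\le 0$ and $Q_4(x)=0$ to drop the $e(0,t)$ terms, and let the construction of $R_1$ combined with the boundary condition $e(1,t)=O_2\,e_x(1,t)$ cancel the integral boundary terms, leaving $\bigl(Q_5+\tfrac{1}{O_2}Q_7\bigr)e(1,t)^2=-\zeta e(1,t)^2$. The only cosmetic differences are that the paper eliminates $e_x(1,t)$ in favor of $e(1,t)$ (you do the reverse) before observing the cancellation, and that your explicit remark on the sign of the $\ip{f}{\sop e}$ cross-term is, if anything, more careful than the paper's own treatment.
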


%
%
\begin{proof}
We begin by taking the time derivative of the Lyapunov function $V_e(e(\cdot,t))$ along the trajectories of~~\eqref{eqn:error_PDE}-\eqref{eqn:error_PDE_BC}, yielding
\begin{align}
&\frac{d}{dt}V_e(e(\cdot,t))=\ip{e_t(\cdot,t)}{\sop e(\cdot,t)}+\ip{e(\cdot,t)}{\sop e_t(\cdot,t)} \nonumber \\
&\quad = \ip{\mcl{A}e(\cdot,t))}{\sop e(\cdot,t)}+\ip{e(\cdot,t))}{\sop \mcl{A} e(\cdot,t)} \nonumber \\
&\label{eqn:obs:Vdot_1}\qquad + 2 \ip{e(\cdot,t)}{\left(\sop O_1 \right)(\cdot)e_x(1,t)}+2\ip{f(\cdot,t)}{\sop e(\cdot,t)},
\end{align} where we have again used the definition of $\mcl{A}$ from Eqn.~\eqref{eqn:A_prelim} and we have also usd the fact that $\sop$ is self-adjoint. Now, since from the theorem statement we have that $Q_3 \leq 0$ and $Q_4(x)=0$, applying Lemma~\ref{lem:primal_LOI} produces
\begin{align}
\frac{d}{dt}V_e(e(\cdot,t))&\leq \ip{\mcl{Q}e(\cdot,t)}{e(\cdot,t)}+2\ip{f(\cdot,t)}{\sop e(\cdot,t)} \nonumber \\
&+ e(1,t)\left(Q_5 e(1,t)+\igzo Q_6(x)e(x,t)dx \right) \nonumber \\
&+ e_x(1,t)\left(Q_7 e(1,t)+\igzo Q_8(x)e(x,t)dx \right) \nonumber \\
&\label{eqn:obs:Vdot_2}+ 2 \ip{e(\cdot,t)}{R_1(\cdot)e_x(1,t)},
\end{align} where we have used the fact that since $O_1(x)=\left(\sinv R_1\right)(x)$, $R_1(x)=\left(\sop O_1\right)(x)$. We have the boundary condition $e(1,t)=O_2 e_x(1,t)$ and since $O_2<0$, we have that $e_x(1,t)=e(1,t)/O_2$. Substituting in~\eqref{eqn:obs:Vdot_2},
\begin{align}
&\frac{d}{dt}V_e(e(\cdot,t)) \nonumber \\
&\leq \ip{\mcl{Q}e(\cdot,t)}{e(\cdot,t)}+2\ip{f(\cdot,t)}{\sop e(\cdot,t)} \nonumber \\
& \quad +\left(Q_5+\frac{1}{O_2}Q_7 \right)e(1,t)^2 \nonumber \\
&\label{eqn:obs:Vdot_3} \quad +e(1,t)\igzo \left(Q_6(x)+\frac{1}{O_2}Q_8(x)+\frac{2}{O_2} R_1(x) \right)e(x,t)dx.
\end{align} Since $O_2<0$ is a scalar such that $Q_5+Q_7/O_2<0$, let
\begin{equation}\label{eqn:obs:gain1}
\zeta=-(Q_5+\frac{1}{O_2}Q_7).
\end{equation} Then $\zeta>0$. Now, using the definition of $R_1(x)$ we get that
\begin{equation}\label{eqn:obs:gain2}
Q_6(x)+\frac{1}{O_2}Q_8(x)+\frac{2}{O_2} R_1(x)=0.
\end{equation} Substituting Eqns.~\eqref{eqn:obs:gain1}-\eqref{eqn:obs:gain2} into Eqn.~\eqref{eqn:obs:Vdot_3}, we find
\begin{align*}
\frac{d}{dt}V_e(e(\cdot,t))\leq&\ip{\mcl{Q}e(\cdot,t)}{e(\cdot,t)}+2\ip{f(\cdot,t)}{\sop e(\cdot,t)} \nonumber \\
&  -\zeta e(1,t)^2.
\end{align*}
\end{proof}

\subsection{Output Feedback Based Control}
We now have the following set of coupled parabolic PDEs.
\begin{align}
w_t(x,t)=&\label{eqn:couple1}a(x)w_{xx}(x,t)+b(x)w_x(x,t)+c(x)w(x,t)+f(x,t),\\
\wh_t(x,t)=&a(x)\wh_{xx}(x,t)+b(x)\wh_x(x,t)+c(x)\wh(x,t) \nonumber \\
&\label{eqn:couple2}+O_1(x) \left(\wh_x(1,t)-w_x(1,t) \right),
\end{align}
with boundary conditions
\begin{align}
w_x(0,t)=0, \quad &\label{eqn:couple3}w(1,t)=\mcl{F}\wh(\cdot,t),\\
\wh_x(0,t)=0, \quad &\label{eqn:couple4}\wh(1,t)=\mcl{F}\wh(\cdot,t)+O_2\left(\wh_x(1,t)-w_x(1,t) \right).
\end{align}

We now prove that the previously designed controller and the observer can be coupled such that norm of the system state remains bounded in the presence of an exogenous input.
\begin{theorem}\label{thm:coupled}
Suppose there exist scalars $0<\epsilon_1<\epsilon_2<\infty$, $\delta,\beta>0$ $d_1,d_2 \in \N$ and polynomials $\{N,P_1,P_2\} \in \Xi_{d_1,d_2,\epsilon_1,\epsilon_2}$ and $\{N,P_1,P_2\} \in \Xi_{d_1,d_2,\epsilon_1,\epsilon_2}$, such that
\begin{align*}
&\{-T_0-2\delta M,-T_1-2\delta K_1,-T_2-2\delta K_2\} \in \Xi_{d_1,d_2,0,\beta},\\
&\{-Q_0-2\delta N,-Q_1-2\delta P_1,-Q_2-2\delta P_2\} \in \Xi_{d_1,d_2,0,\beta},\\
&T_3 \leq 0, \quad T_4(x)=0, \quad Q_3 \leq 0, \quad Q_4(x)=0,
\end{align*} where $\{T_0,T_1,T_2,T_3,T_4,T_5,T_6\}=\mcl{M}_{\epsilon_1}\left(M,K_1,K_2\right)$ and  $\{Q_0,Q_1,Q_2,Q_3,Q_4,Q_5,Q_6,Q_7,Q_8\}=\mcl{N}_{\epsilon_1}\left(N,P_1,P_2\right)$ as provided in Definitions~\ref{def:dual} and~\ref{def:primal} respectively. Let $\pop$ be defined as in Equation~\eqref{eqn:pop} and $\sop$ as in Equation~\eqref{eqn:sop}.

Then for any solution $\bmat{w(x,t) & \wh(x,t)}$ of the coupled dynamics~\eqref{eqn:couple1}-\eqref{eqn:couple4}, if $\mcl{F}$ is given by Lemma~\ref{lem:control} and  $O_1(x)$ and $O_2$ are given by Lemma~\ref{lem:observer}, there exists a scalar $\gamma>0$ such that
\[\norm{w}_{L_2\left(0,\infty;\lt \right)} \leq \gamma \norm{f}_{L_2\left(0,\infty;\lt \right)},\] for any $f \in L_2\left(0,\infty;\lt \right)$.
\end{theorem}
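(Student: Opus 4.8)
The plan is to combine the two Lyapunov functions already analyzed in Lemmas~\ref{lem:control} and~\ref{lem:observer} into a single composite storage function for the coupled plant--observer system, and then extract the $L_2$-gain bound by integrating a dissipation inequality in time. Concretely, I would take
\[
V(t) = \lambda\, V_o(\wh(\cdot,t)) + V_e(e(\cdot,t)) = \lambda\ip{\wh(\cdot,t)}{\pinv\wh(\cdot,t)} + \ip{e(\cdot,t)}{\sop e(\cdot,t)},
\]
with $e=\wh-w$ and a weight $\lambda>0$ fixed later. Since $\pop,\sop$ are positive and bounded by Theorem~\ref{thm:jointpos}, $V$ is equivalent to $\lambda\norm{\wh}^2+\norm{e}^2$; in particular $V_o\ge\tfrac{1}{\epsilon_2}\norm{\wh}^2$, $V_o\ge\epsilon_1\norm{\zh}^2$ with $\zh=\pinv\wh$, and $V_e\ge\epsilon_1\norm{e}^2$, which I use repeatedly for coercivity. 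The target is $\dot V\le -\delta V + c\norm{f}^2$; integrating this from $0$ to $\infty$ under the zero-initial-data assumption implicit in the $L_2$-gain statement ($V(0)=0$) and using $V\ge0$ gives $\int_0^\infty V\,dt\le\tfrac{c}{\delta}\norm{f}_{L_2(0,\infty;\lt)}^2$, from which the claim follows once $\norm{w}^2$ is controlled by a multiple of $V$.

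First I would differentiate $V$ along~\eqref{eqn:couple1}--\eqref{eqn:couple4} and apply Lemma~\ref{lem:control} to the $V_o$ term and Lemma~\ref{lem:observer} to the $V_e$ term, obtaining
\begin{align*}
\dot V \le\;& \lambda\ip{\mcl{T}\zh}{\zh} + \ip{\mcl{Q}e}{e} + 2\lambda\, e_x(1,t)\ip{O_1}{\zh} - \lambda\mu\,\zh(1,t)^2 \\
&- \lambda\tfrac{T_6 O_2}{Z_1}\zh(1,t)e_x(1,t) - \zeta\, e(1,t)^2 + 2\ip{f}{\sop e}.
\end{align*}
The two $\Xi$-membership hypotheses assert precisely that the operators $-\mcl{T}-2\delta\pop$ and $-\mcl{Q}-2\delta\sop$ are positive semidefinite on $\lt$, so $\ip{\mcl{T}\zh}{\zh}\le-2\delta\ip{\pop\zh}{\zh}=-2\delta V_o$ and $\ip{\mcl{Q}e}{e}\le-2\delta V_e$. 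This converts the two quadratic-form terms into $-2\delta\lambda V_o-2\delta V_e$, leaving the boundary cross-terms and the disturbance term to be absorbed.

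The crux is the boundary coupling between the two subsystems. Using $e(1,t)=O_2 e_x(1,t)$, the strictly negative error term becomes $-\zeta e(1,t)^2=-\zeta O_2^2 e_x(1,t)^2$, furnishing a reservoir in $e_x(1,t)^2$. I then split the cross-terms with Young's inequality,
\[
2\lambda e_x(1)\ip{O_1}{\zh} \le \lambda a_1 e_x(1)^2 + \tfrac{\lambda}{a_1}\norm{O_1}^2\norm{\zh}^2,
\]
\[
-\lambda\tfrac{T_6 O_2}{Z_1}\zh(1)e_x(1) \le \tfrac{\lambda\mu}{2}\zh(1)^2 + \tfrac{\lambda}{2\mu}\Bigl(\tfrac{T_6 O_2}{Z_1}\Bigr)^2 e_x(1)^2,
\]
for a free parameter $a_1>0$. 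The $\zh(1)^2$ contribution is then dominated by $-\lambda\mu\zh(1)^2$, and the $\norm{\zh}^2$ contribution is absorbed into $-2\delta\lambda V_o$ (retaining, say, $-\delta\lambda V_o$) by taking $a_1$ large, using $\norm{\zh}^2\le V_o/\epsilon_1$. With $a_1$ so fixed, the net coefficient of $e_x(1)^2$ is $\lambda\bigl(a_1+\tfrac{1}{2\mu}(\tfrac{T_6 O_2}{Z_1})^2\bigr)-\zeta O_2^2$; since every quantity except $\lambda$ is already determined, I choose $\lambda>0$ small enough to make it nonpositive. I expect this to be the main obstacle, precisely because it is where the independently designed observer and controller must be reconciled: the coupling gains are pinned down by Lemmas~\ref{lem:control}--\ref{lem:observer}, and it is only the freedom in the weight $\lambda$ (together with $a_1$) that lets the fixed dissipation $-\zeta O_2^2 e_x(1)^2$ swallow them -- the analogue here of a separation principle.

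After discarding the now-nonpositive boundary terms, there remains $\dot V\le -\delta\lambda V_o-\delta V_e+2\ip{f}{\sop e}$. For the disturbance I would use Young's inequality and boundedness of $\sop$, namely $2\ip{f}{\sop e}\le\tfrac1\rho\norm{f}^2+\rho\norm{\sop}_{\mcl{L}}^2\norm{e}^2\le\tfrac1\rho\norm{f}^2+\tfrac{\rho\norm{\sop}_{\mcl{L}}^2}{\epsilon_1}V_e$, and pick $\rho$ small enough that $\rho\norm{\sop}_{\mcl{L}}^2/\epsilon_1\le\delta$. Collecting terms yields $\dot V\le-\delta(\lambda V_o+V_e)+\tfrac1\rho\norm{f}^2=-\delta V+\tfrac1\rho\norm{f}^2$, and integrating with $V(0)=0$, $V\ge0$ gives $\int_0^\infty V\,dt\le\tfrac{1}{\delta\rho}\norm{f}_{L_2(0,\infty;\lt)}^2$. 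Finally, since $w=\wh-e$, $\norm{w}^2\le 2\norm{\wh}^2+2\norm{e}^2\le 2\epsilon_2 V_o+\tfrac{2}{\epsilon_1}V_e\le C\,V$ for a constant $C$ depending on $\lambda,\epsilon_1,\epsilon_2$, so $\norm{w}_{L_2(0,\infty;\lt)}^2\le C\int_0^\infty V\,dt\le\tfrac{C}{\delta\rho}\norm{f}_{L_2(0,\infty;\lt)}^2$, and the claim holds with $\gamma=\sqrt{C/(\delta\rho)}$.
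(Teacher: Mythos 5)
Your proposal is correct and shares the paper's overall skeleton: a composite Lyapunov function built from the functionals of Lemmas~\ref{lem:control} and~\ref{lem:observer} with a tunable relative weight, absorption of the controller--observer coupling terms by exploiting that weight, absorption of the disturbance, and integration in time under zero initial conditions (your $V=\lambda V_o+V_e$ with $\lambda$ small is just a rescaling of the paper's $V=V_o+AV_e$ with $A=1/\lambda$ large). Where you genuinely differ is in execution. The paper handles the coupling by asserting, via a Schur complement on a $3\times 3$ operator matrix in the variables $(\zh,\zh(1,t),e(1,t))$, that negativity holds ``for large enough $A$,'' and handles the disturbance by a second operator Schur complement involving $\mcl{Q}+\delta\sop$ and $\sop$; you replace both steps by scalar Young/Cauchy--Schwarz inequalities with explicit constants $a_1$ and $\rho$. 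Your route is more elementary and makes fully explicit what the paper leaves implicit: exactly how the fixed dissipation reservoir $-\zeta O_2^2\,e_x(1,t)^2$ dominates the coupling gains once $\lambda$ is small, with no circularity in the order of choosing $a_1$, $\lambda$, $\rho$. What the paper's formulation buys is compactness, a slightly sharper disturbance constant (its $\sop$-weighted Young step gives $\tfrac{A\epsilon_2}{\delta}$ rather than an operator-norm bound), and a shape that mirrors the LMI/SOS computational framework; also, the paper integrates to get separate $L_2$ bounds on $\wh$ and $e$ and then applies the triangle inequality, whereas you bound $\norm{w}^2\le C\,V$ pointwise in time, which is equivalent. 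One bookkeeping slip to fix: after absorbing the $\tfrac{\lambda}{a_1}\norm{O_1}^2\norm{\zh}^2$ term you state the remaining inequality with $-\delta V_e$, but nothing was charged against the error decay at that stage, so the full budget $-2\delta V_e$ is still available; as literally written, spending $\delta V_e$ on the disturbance would leave no decay in $V_e$ and hence no control of $\int\norm{e}^2\,dt$, which you need since $w=\wh-e$. Retaining $-2\delta V_e$ (or choosing $\rho$ so that $\rho\norm{\sop}_{\mcl{L}}^2/\epsilon_1\le\delta/2$) repairs this immediately, and your final inequality $\tfrac{d}{dt}V\le-\delta V+\tfrac{1}{\rho}\norm{f(\cdot,t)}^2$ then holds as claimed.
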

\begin{proof}
For the Lyapunov function $V_o(\wh(\cdot,t))=\ip{\wh(\cdot,t)}{\pinv w(\cdot,t)}$, we have from Lemma~\ref{lem:control} that there exists  scalar $\mu>0$ such that
\begin{align*}
\frac{d}{dt}V_o(\wh(\cdot,t)) \leq & \ip{\mcl{T}\zh(\cdot,t)}{\zh(\cdot,t)}+2 \ip{O_1(\cdot)e_x(1,t)}{\zh(\cdot,t)} \\
& -\mu \zh(1,t)^2 -\frac{T_6 O_2}{Z_1}\zh(1,t)e_x(1,t).
\end{align*} We have from Lemma~\ref{lem:observer} that $e_x(1,t)=e(1,t)/O_2$. Therefore
\begin{align}
\frac{d}{dt}V_o(\wh(\cdot,t)) \leq & \ip{\mcl{T}\zh(\cdot,t)}{\zh(\cdot,t)}+\frac{2}{O_2} \ip{O_1(\cdot)e(1,t)}{\zh(\cdot,t)} \nonumber \\
&\label{eqn:coupled:Vobs} -\mu \zh(1,t)^2 -\frac{T_6 }{Z_1}\zh(1,t)e(1,t).
\end{align}

For the Lyapunov function $V_e(e(\cdot,t))=\ip{e(\cdot,t)}{\sop e(\cdot,t)}$, we have from Lemma~\ref{lem:observer} that there exists a scalar $\zeta>0$ such that
\begin{align}
\frac{d}{dt}V_e(e(\cdot,t))\leq &\ip{\mcl{Q}e(\cdot,t)}{e(\cdot,t)}+2\ip{f(\cdot,t)}{\sop e(\cdot,t)} \nonumber \\
&\label{eqn:coupled:Vcont}  -\zeta e(1,t)^2.
\end{align} From Equations~\eqref{eqn:coupled:Vobs}-\eqref{eqn:coupled:Vcont} we conclude that for any $A>0$
\begin{align}
&\frac{d}{dt}V_o(\wh(\cdot,t))+A \frac{d}{dt}V_e(e(\cdot,t)) \nonumber \\
&\leq A\ip{\mcl{Q}e(\cdot,t)}{e(\cdot,t)}+2A \ip{f(\cdot,t)}{\sop e(\cdot,t)} \nonumber  \\
&\label{eqn:coupled:Vtotal_1} \quad + \left\langle \bmat{\zh(\cdot,t) \\ \zh(1,t) \\ e(1,t)}, \bmat{\mcl{T} & 0 & \mcl{O} \\ \star & -\mu & -\frac{T_6}{2Z_1} \\ \star & \star & -A \zeta} \bmat{\zh(\cdot,t) \\ \zh(1,t) \\ e(1,t)}  \right\rangle,
\end{align} where $\left(\mcl{O}y \right)(x)=\frac{1}{O_2} O_1(x)y(x)$, for any $y \in \lt$, and the inner product is defined on $\lt \times \lt \times \lt$. Now, since $\{-T_0-2\delta M,-T_1-2\delta K_1,-T_2-2\delta K_2\} \in \Xi_{d_1,d_2,0,\beta}$, we have that $\mcl{T}+2\delta \pop \leq 0$. Therefore, for any $0<\theta<\delta$, it can be established using Schur complement that for a large enough $A>0$,
\[\bmat{\mcl{T}+2\theta \pop & 0 & \mcl{O} \\ \star & -\mu & -\frac{T_6}{2Z_1} \\ \star & \star & -A \zeta} \leq 0.\] Therefore
\[\bmat{\mcl{T}  & 0 & \mcl{O} \\ \star & -\mu & -\frac{T_6}{2Z_1} \\ \star & \star & -A \zeta} \leq \bmat{-2 \theta \pop & 0 & 0\\ \star & 0 & 0\\ \star & \star & 0}.\] Substituting into Equation~\eqref{eqn:coupled:Vtotal_1}, we get
\begin{align}
&\frac{d}{dt}V_o(\wh(\cdot,t))+A \frac{d}{dt}V_e(e(\cdot,t)) \nonumber \\
&\leq A\ip{\mcl{Q}e(\cdot,t)}{e(\cdot,t)}+2A \ip{f(\cdot,t)}{\sop e(\cdot,t)} \nonumber  \\
& \quad - 2 \theta \ip{\zh(\cdot,t)}{\pop \zh(\cdot,t)}. \nonumber
\end{align} Let $V_o(\wh(\cdot,t))+A V_e(e(\cdot,t))=V(t)$, thus
\begin{align*}
\frac{d}{dt}V(t)+2\theta \ip{\zh(\cdot,t)}{\pop \zh(\cdot,t)} \leq & A\ip{\mcl{Q}e(\cdot,t)}{e(\cdot,t)}\\
&+2A \ip{f(\cdot,t)}{\sop e(\cdot,t)}.
\end{align*} Adding $A \delta \ip{\sop e(\cdot,t))}{e(\cdot,t)}-\frac{A}{\delta}\ip{f(\cdot,t)}{\sop f(\cdot,t)}$ to both sides,
\begin{align}
&\frac{d}{dt}V(t)+2\theta \ip{\zh(\cdot,t)}{\pop \zh(\cdot,t)}+A\delta \ip{\sop e(\cdot,t))}{e(\cdot,t)}\\
&-\frac{A}{\delta}\ip{f(\cdot,t)}{\sop f(\cdot,t)} \\
&\label{eqn:coupled:Vtotal_2}\leq  A \left\langle \bmat{e(\cdot,t) \\ f(\cdot,t)}, \bmat{\mcl{Q}+\delta \sop & \sop \\ \star & -\frac{1}{\delta}\sop}\bmat{e(\cdot,t) \\ f(\cdot,t)} \right\rangle.
\end{align} Since $\{-Q_0-2\delta N,-Q_1-2\delta P_1,-Q_2-2\delta P_2\} \in \Xi_{d_1,d_2,0,\beta}$, we have that $\mcl{Q}+2\delta \sop \leq 0$. Hence, using Schur complement we conclude that
\[\bmat{\mcl{Q}+\delta \sop & \sop \\ \star & -\frac{1}{\delta}\sop} \leq 0.\]
Therefore, from Equation~\eqref{eqn:coupled:Vtotal_2} we conclude that
\begin{align}
&\frac{d}{dt}V(t)+2\theta \ip{\zh(\cdot,t)}{\pop \zh(\cdot,t)}+A\delta \ip{\sop e(\cdot,t))}{e(\cdot,t)} \nonumber \\
&\label{eqn:couple:Vtotal_3} \leq \frac{A}{\delta}\ip{f(\cdot,t)}{\sop f(\cdot,t)}.
\end{align} Since the operator $\sop$ is defined using $\{N,P_1,P_2\} \in \Xi_{d_1,d_2,\epsilon_1,\epsilon_2}$, we have from Theorem~\ref{thm:jointpos} that, for all $g \in \lt$,
\begin{equation}\label{eqn:coupled:bound1}
\epsilon_1 \norm{g}^2 \leq \ip{\sop g}{g} \leq \epsilon_2 \norm{g}^2.
\end{equation} Similarly, since $\pop$ is defined using $\{M,K_1,K_2\} \in \Xi_{d_1,d_2,\epsilon_1,\epsilon_2}$, using Theorem~\ref{thm:jointpos} it can established that
\[\frac{1}{\epsilon_2} \norm{g}^2 \leq \ip{\pinv g}{g} \leq \frac{1}{\epsilon_1} \norm{g}^2.\] Since $\ip{\zh(\cdot,t)}{\pop \zh(\cdot,t)} = \ip{\pinv \wh(\cdot,t)}{ \wh(\cdot,t)}$, from the previous expression we have that
\begin{equation}\label{eqn:coupled:bound2}
\frac{1}{\epsilon_2} \norm{\wh(\cdot,t)}^2 \leq \ip{\pinv \wh(\cdot,t)}{ \wh(\cdot,t)}=\ip{\zh(\cdot,t)}{\pop \zh(\cdot,t)}.
\end{equation} Substituting Equation~\eqref{eqn:coupled:bound2} in Equation~\eqref{eqn:couple:Vtotal_3} and using~\eqref{eqn:coupled:bound1}, we get
\begin{align*}
&\frac{d}{dt}V(t)+2\frac{\theta}{\epsilon_2} \norm{\wh(\cdot,t)}^2+A\delta \epsilon_1 \norm{e(\cdot,t)}^2  \leq \frac{A \epsilon_2 }{\delta}\norm{f(\cdot,t)}^2.
\end{align*} Integrating in time from $0$ to some $0<T<\infty$, we get
\begin{align}
&V(T)-V(0)+2\frac{\theta}{\epsilon_2} \int_0^T \norm{\wh(\cdot,t)}^2dt+A\delta \epsilon_1 \int_0^T \norm{e(\cdot,t)}^2 dt \nonumber  \\
&\label{eqn:couple:Vtotal_4}\leq \frac{A \epsilon_2 }{\delta}\int_0^T \norm{f(\cdot,t)}^2 dt.
\end{align} Now,
$V(T)=V_o(\wh(\cdot,T))+A V_e(e(\cdot,T)) \geq 0$. Additionally, if we assume zero initial conditions, then
$V(0)=V_o(\wh(\cdot,0))+A V_e(e(\cdot,0)) = 0$. Therefore we conclude from Equation~\eqref{eqn:couple:Vtotal_4} that
\[ \int_0^T \norm{\wh(\cdot,t)}^2dt+\int_0^T \norm{e(\cdot,t)}^2 dt \nonumber \leq \frac{A \epsilon_2 }{\delta \nu}\int_0^T \norm{f(\cdot,t)}^2 dt,\] where $\nu=\min \left\lbrace 2\frac{\theta}{\epsilon_2},A\delta \epsilon_1 \right\rbrace$. Since $f \in L_2(0,\infty;\lt)$, taking the limit $T \rightarrow \infty$, we get
\begin{align*}
& \norm{\wh}^2_{L_2(0,\infty;\lt)}+ \norm{e}^2_{L_2(0,\infty;\lt)} \nonumber \\
&\leq \frac{A \epsilon_2 }{\delta \nu}\norm{f}^2_{L_2(0,\infty;\lt)}.
\end{align*} Hence, we conclude that
\begin{align*}
\norm{\wh}_{L_2(0,\infty;\lt)} \leq &\sqrt{\frac{A \epsilon_2 }{\delta \nu}}\norm{f}_{L_2(0,\infty;\lt)},\\
\norm{e}_{L_2(0,\infty;\lt)} \leq &\sqrt{\frac{A \epsilon_2 }{\delta \nu}}\norm{f}_{L_2(0,\infty;\lt)}.
\end{align*} Since $e=\wh-w$, $w=\wh-e$. Therefore
\begin{align*}
\norm{w}_{L_2(0,\infty;\lt)} &\leq \norm{\wh}_{L_2(0,\infty;\lt)}+\norm{e}_{L_2(0,\infty;\lt)}\\
&\leq 2 \sqrt{\frac{A \epsilon_2 }{\delta \nu}}\norm{f}_{L_2(0,\infty;\lt)}.
\end{align*} Setting $\gamma=2\sqrt{\frac{A \epsilon_2 }{\delta \nu}}$ completes the proof.
\end{proof}
\section{NUMERICAL RESULTS}\label{sec:num_results}
In this section we consider a couple of examples on which we test the conditions of Theorem~\ref{thm:coupled} using SOS and SDP. These numerical results are obtained using the Matlab toolbox SOSTOOLS~\cite{prajna2001introducing}.

We consider the following two PDEs. First consider the classical heat equation with an unsteady source term.
\begin{equation}\label{eqn:exmp1}
w_t(x,t)=w_{xx}(x,t)+\lambda w(x,t)+f(x,t),
\end{equation}
Without feedback, this system is unstable for $\lambda>\pi^2/4$. Next, we consider a randomly generated PDE.
\begin{align}
w_t(x,t)&= \left(x^3-x^2+2 \right)w_{xx}(x,t)+\left(3x^2-2x \right)w_x(x,t) \nonumber \\
&\label{eqn:exmp2}+ \left( -0.5x^3+1.3x^2-1.5x+0.7x+\lambda \right)w(x,t)+f(x,t)
\end{align} with boundary conditions
\begin{equation}
w_x(0,t)=0, \quad w(1,t)=u(t),
\end{equation} By using stability analysis and numerical simulation, we estimate that PDE is unstable for $\lambda>4.66$.

In these examples, we find the maximum $\lambda$, using a bisection search, for which we can construct stabilizing output-based boundary feedback controllers. We test the conditions of Theorem~\ref{thm:coupled} with $\epsilon_1=0.001$, $\epsilon_2=1$, $\delta=0.001$ and increasing values of $d_1$ and $d_2$. Table~\ref{table:exmp1:cont} presents the maximum $\lambda>0$  for which we can construct output feedback controllers for PDE~\eqref{eqn:exmp1} as a function of the degree $d_1=d_2=d$ of the polynomials which define the controller, observer, and Lyapunov function. Table~\ref{table:exmp2:cont} presents the maximum $\lambda>0$ for PDE~\eqref{eqn:exmp2}.

\begin{table}{}
\caption{Maximum $\lambda$ as a function of polynomial degree $d_1=d_2=d$ for which we can construct output feedback boundary controllers for PDE~\eqref{eqn:exmp1}.}
\vspace{-10pt}
\begin{center}
    \begin{tabular}{l *{7}{c}}\hline
  $d=7$ & $8$ & $9$ & $10$ & $11$  \\ \hline
  $\lambda=12.69$ & $16.01$ & $17.96$ & $17.96$ & $21.97$
\end{tabular}
\end{center}
\label{table:exmp1:cont}
\end{table}

\begin{table}{}
\caption{Maximum $\lambda$ as a function of polynomial degree $d_1=d_2=d$ for which we can construct output feedback boundary controllers for PDE~\eqref{eqn:exmp2}.}
\vspace{-10pt}
\begin{center}
    \begin{tabular}{l *{7}{c}}\hline
  $d=4$ & $5$ & $6$ & $7$ & $8$  \\ \hline
  $\lambda=18.75$ & $28.78$ & $32.03$ & $32.03$ & $39.16$
\end{tabular}
\end{center}
\label{table:exmp2:cont}
\end{table}

The numerical results suggest that increasing the degree $d_1=d_2=d$ of the polynomial representation leads to higher values of $\lambda>0$. Moreover, the value of $\lambda>0$ does not appear to be upper bounded, which implies that the method is asymptotically accurate. That is given any $\lambda>0$, we conjecture that we can construct output feedback controllers for a large enough degree $d$.

Figures~\ref{fig:1}-\ref{fig:2} represent the simulation of PDE~\eqref{eqn:exmp2} with $\lambda=39$ subject to the output feedback based control in the presence of exogenous input $f(x,t)=e^{-t}\cos(\pi t)\left(1+\sin(0.1 \pi x)\right)$.
\begin{figure}[h!]
\vspace{-10pt}
    \centering
    \includegraphics[width=0.4\textwidth]{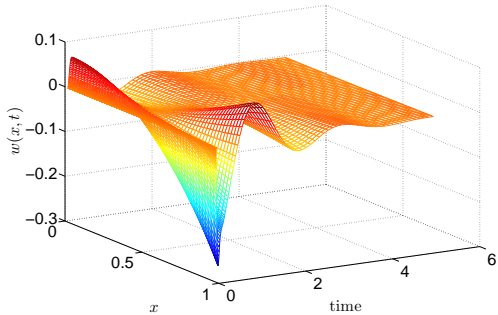}
    \vspace{-5pt}
\caption{State of PDE~\eqref{eqn:exmp2} with point observation and point actuation.}
\label{fig:1}
\end{figure}
\begin{figure}[h!]
\vspace{-10pt}
    \centering
    \includegraphics[width=0.4\textwidth]{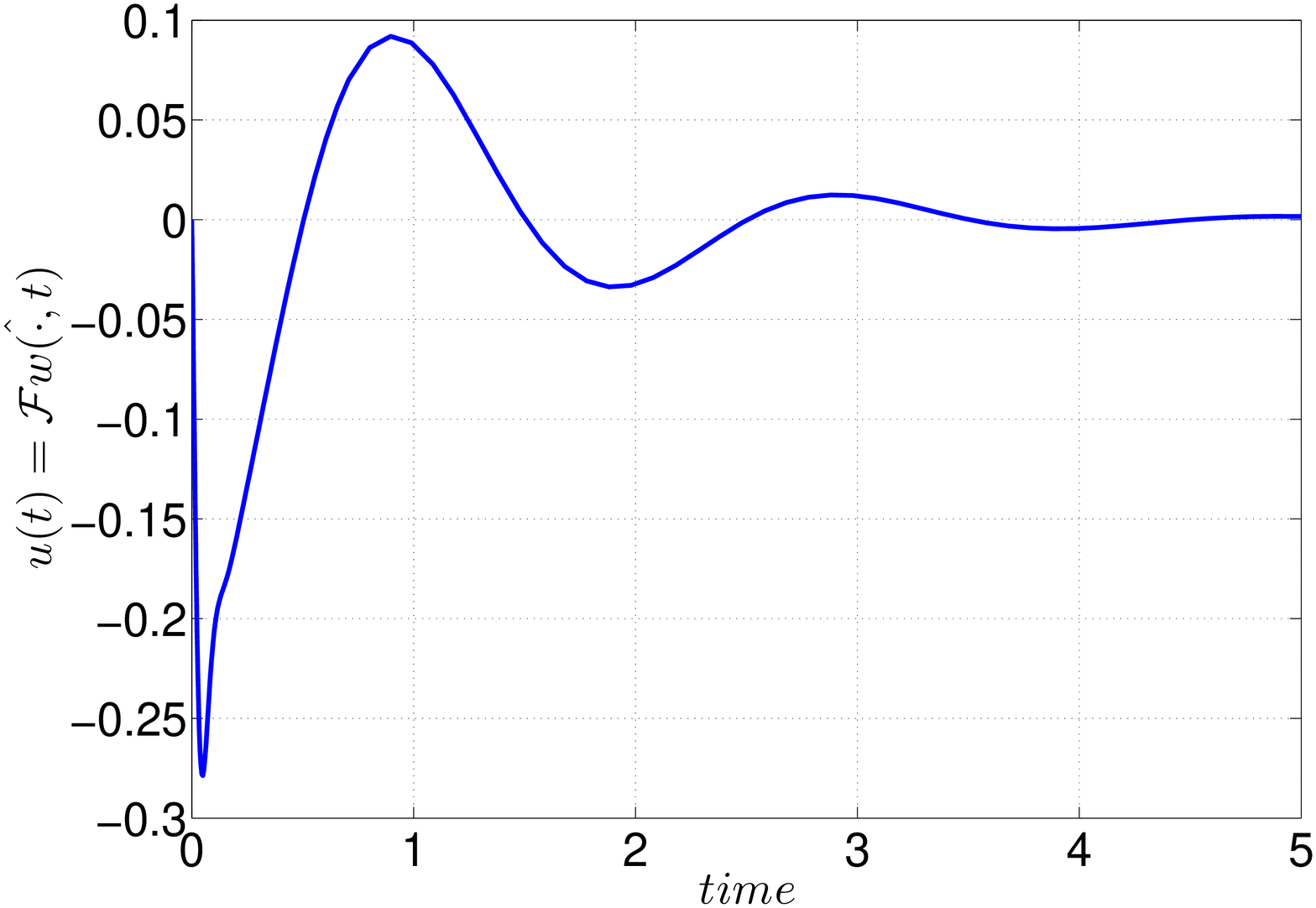}
    \vspace{-5pt}
\caption{Boundary control effort $w(1,t)=u(t)$ for PDE~\eqref{eqn:exmp2}.}
\label{fig:2}
\end{figure}

One of the key technical advances of this paper is the use of semi-separable kernels $K_1$, $K_2$, $P_1$ and $P_2$ and this advance leads to significantly more complex stability conditions. Therefore we wish to establish if the inclusion of the variables $K_1$, $K_2$, $P_1$ and $P_2$ does, in fact, provide any significant performance gain. In order to do this, we check the conditions of Theorem~\ref{thm:coupled} while setting $K_1=K_2=P_1=P_2=0$ (similar to our previous approach~\cite{gahlawat2011designing}) and applied these conditions to the example PDEs.  Table~\ref{table:simple_lyap1:cont1} presents these results for PDE~\eqref{eqn:exmp1} and Table~\ref{table:simple_lyap1:cont2} presents results for PDE~\eqref{eqn:exmp2}.

\begin{table}{}
\caption{Maximum $\lambda$ as a function of polynomial degree, $d_1=d_2=d$ for PDE~\eqref{eqn:exmp1} for which we can construct controllers using  with $K_1=K_2=P_1=P_2=0$.}
\vspace{-10pt}
\begin{center}
    \begin{tabular}{l *{7}{c}}\hline
    $d=1$ & $2$ & $3$ & $4\ldots 10$  \\ \hline
  $\lambda=3.91$   & $4.78$ & $4.88$  & $4.88$
\end{tabular}
\end{center}
\label{table:simple_lyap1:cont1}
\end{table}

\begin{table}{}
\caption{Maximum $\lambda$ as a function of polynomial degree, $d_1=d_2=d$ for PDE~\eqref{eqn:exmp2} for which we can construct controllers using  with $K_1=K_2=P_1=P_2=0$.}
\vspace{-10pt}
\begin{center}
    \begin{tabular}{l *{7}{c}}\hline
    $d=1$ & $2$ & $3$ & $4\ldots 10$  \\ \hline
  $\lambda=3.51$   & $5.47$ & $6.64$  & $6.64$
\end{tabular}
\end{center}
\label{table:simple_lyap1:cont2}
\end{table}

Comparing Tables~\ref{table:simple_lyap1:cont1}-\ref{table:simple_lyap1:cont2} with Tables~\ref{table:exmp1:cont}-\ref{table:exmp2:cont} we observe that the inclusion of kernels $K_1$, $K_2$, $P_1$ and $P_2$ allows the construction of output feedback based controllers for significantly higher values of $\lambda$. Moreover, by setting $K_1=K_2=P_2=P_2=0$, the numerical results appear to show an upper bound to the $\lambda$ for which we can design controllers without the use of these kernel functions. We conjecture, therefore, that kernel functions are a necessary part of any Lyapunov-based method for analysis and control of PDEs.

Finally, as we previously stated, the SOS conditions for the design of output feedback controllers can be easily modified for systems with other types of boundary conditions. To this end, we provide the numerical results for controller synthesis for PDEs~\eqref{eqn:exmp1} and \eqref{eqn:exmp2} with boundary conditions defined in Table~\ref{table:alt_BC}.
\begin{table}{}
\caption{Alternative boundary conditions and outputs for PDE~\eqref{eqn:exmp1}-\eqref{eqn:exmp2}.}
\begin{center}
    \begin{tabular}{l *{3}{c}}\hline
  & Boundary Condition & Output $y(t)$  \\ \hline
   Dirichlet & {$\!\begin{aligned}
               w(0,t) &= 0 \\
               w(1,t) &= u(t) \end{aligned}$} & $w_x(1,t)$  \\ \hline
   Neumann & {$\!\begin{aligned}
               w_x(0,t) &= 0 \\
               w_x(1,t) &= u(t) \end{aligned}$} & $w(1,t)$ \\ \hline
   Robin & {$\!\begin{aligned}
               w(0,t)+w_x(0,t) &= 0 \\
               w(1,t)+w_x(1,t) &= u(t) \end{aligned}$} & $w(1,t)$
\end{tabular}
\end{center}
\label{table:alt_BC}
\end{table}
Table~\ref{table:alt:cont1} illustrates the maximum $\lambda$ for which we can construct output feedback controllers as a function of $d_1=d_2=d$ for PDE~\eqref{eqn:exmp1} with boundary conditions and outputs given in Table~\ref{table:alt_BC}. Similarly, Table~\ref{table:alt:cont2} illustrates these results for PDE~\eqref{eqn:exmp2}.

\begin{table}{}
\caption{Maximum $\lambda$ as a function of polynomial degree, $d_1=d_2=d$ for PDE~\eqref{eqn:exmp1} with boundary conditions and outputs given in Table~\ref{table:alt_BC}.}
\begin{center}
    \begin{tabular}{l *{7}{c}}\hline
  &$d=7$ & $8$ & $9$ & $10$ & $11$  \\ \hline
  Dirichlet & $\lambda=14.25$ & $17.96$ & $17.96$ & $24.21$ & $25.78$   \\
  Neumann & $12.69$ & $16.01$ & $17.96$ & $17.96$ & $21.97$ \\
  Robin & $11.71$ & $14.45$ & $16.40$ & $17.96$ & $18.84$ \\
\end{tabular}
\end{center}
\label{table:alt:cont1}
\end{table}

\begin{table}{}
\caption{Maximum $\lambda$ as a function of polynomial degree, $d_1=d_2=d$ for PDE~\eqref{eqn:exmp2} with boundary conditions and outputs given in Table~\ref{table:alt_BC}.}
\begin{center}
    \begin{tabular}{l *{7}{c}}\hline
  &$d=4$ & $5$ & $6$ & $7$ & $8$  \\ \hline
  Dirichlet & $\lambda=21.87$ & $33.59$ & $36.71$ & $36.71$ & $44.53$   \\
  Neumann & $18.75$ & $29.78$ & $32.03$ & $32.03$ & $39.16$ \\
  Robin & $14.16$ & $26.66$ & $28.90$ & $28.90$ & $30.46$ \\
\end{tabular}
\end{center}
\label{table:alt:cont2}
\end{table}

We note that the backstepping method has been applied to Example~\eqref{eqn:exmp1} and is also able to construct exponentially stabilizing output feedback boundary controllers for arbitrary $\lambda>0$ (see~\cite{krstic2008adaptive}). Therefore, while we cannot necessarily claim any improvement in performance over this established methods, our approach is at least competitive and may have certain advantages such as relative ease of implementation (changing the system is a one-line edit) and the fact that our approach does not require numerical integration of a PDE.
\section{CONCLUSIONS}
In this paper we developed an algorithmic approach for designing output feedback boundary controllers for a class of linear scalar valued inhomogeneous parabolic PDEs. Our approach is based on a parameterization of positive multiplier and integral operators with semi-separable kernels. We tested the approach on homogeneous and inhomogeneous PDEs using several different types of boundary feedback and several different types of point measurements. Furthermore, we tested our approach with and without kernel functions to determine if kernel functions are a necessary part of Lyapunov theory for PDEs. Our numerical results indicate that kernel functions are a necessary part of Lyapunov functions for PDEs. Further, our numerical results indicate there is little or no conservativity in the method and that our approach is competitive with well-established approaches such as backstepping.
Note that as yet, the observer-based controllers in this paper are not optimal in any norm. Therefore, an obvious future direction of this work is to extend our approach to $\mcl{H}_\infty$-optimal control.
\section*{APPENDIX}

To prove Lemmas~\ref{lem:dual_LOI} and~\ref{lem:primal_LOI}, we use the following identity.
\begin{lemma}[\cite{hardy1952inequalities},\cite{krstic2008boundary}]
\label{lem:wirtinger}
 let $w \in H^2(0,1)$ be a scalar function. Then
  \[\int_0^1 (w(x)-w(0))^2dx \leq \frac{4}{\pi^2} \int_0^1 w_x(x)^2 dx.\]
\end{lemma}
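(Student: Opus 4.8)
The plan is to recognize the claimed bound as a Poincar\'e inequality and to prove it by a spectral (Fourier) expansion in the eigenbasis of the associated Sturm--Liouville operator. First I would reduce to a homogeneous statement: setting $v(x) = w(x) - w(0)$ we have $v(0) = 0$ and $v_x = w_x$, so the inequality is equivalent to
\[\int_0^1 v(x)^2\,dx \leq \frac{4}{\pi^2}\int_0^1 v_x(x)^2\,dx\]
for every $v \in H^1(0,1)$ with $v(0)=0$. The constant $4/\pi^2$ is exactly $1/\lambda_1$, where $\lambda_1 = \pi^2/4$ is the smallest eigenvalue of the problem $-v'' = \lambda v$ on $(0,1)$ with $v(0)=0$ and the natural condition $v_x(1)=0$; this already signals both the right tool and the sharpness of the constant.

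Next I would introduce the normalized eigenfunctions $\phi_k(x) = \sqrt{2}\sin\!\big((k-\tfrac12)\pi x\big)$, $k\geq 1$, which satisfy $\phi_k(0)=0$ and form a complete orthonormal basis of $L_2(0,1)$, with eigenvalues $\lambda_k = \big((k-\tfrac12)\pi\big)^2$ and $\lambda_1 = \pi^2/4$. Expanding $v = \sum_{k\geq1} c_k \phi_k$ in $L_2(0,1)$, Parseval's identity gives $\int_0^1 v^2\,dx = \sum_{k\geq1} c_k^2$. The crux is to obtain the matching expansion for $v_x$. For this I would use the companion orthonormal basis $\psi_k(x)=\sqrt{2}\cos\!\big((k-\tfrac12)\pi x\big)$ (the eigenfunctions of the adjacent problem $u_x(0)=0$, $u(1)=0$), noting $\phi_k' = \sqrt{\lambda_k}\,\psi_k$ and $\psi_k(1)=0$. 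Integrating by parts and using $v(0)=0$ together with $\psi_k(1)=0$ to discard the boundary terms yields $\int_0^1 v_x\,\psi_k\,dx = \sqrt{\lambda_k}\,c_k$; hence Parseval in the $\psi_k$ basis gives $\int_0^1 v_x^2\,dx = \sum_{k\geq1}\lambda_k c_k^2$.

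With both expansions in hand the conclusion is immediate: since $\lambda_k \geq \lambda_1 = \pi^2/4$ for all $k$,
\[\int_0^1 v_x^2\,dx = \sum_{k\geq1}\lambda_k c_k^2 \geq \frac{\pi^2}{4}\sum_{k\geq1} c_k^2 = \frac{\pi^2}{4}\int_0^1 v^2\,dx,\]
which rearranges to the claim after substituting back $v = w - w(0)$. The step I expect to require the most care is the derivative expansion $\int_0^1 v_x^2\,dx = \sum_{k\geq1} \lambda_k c_k^2$: rather than differentiating the series for $v$ term by term (which needs an $H^1$-convergence argument), I would compute the Fourier coefficients of $v_x$ directly through the integration-by-parts identity above, so that only the completeness of $\{\psi_k\}$ and the vanishing of the boundary terms must be justified. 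Finally, equality in the chain holds precisely when $v$ is a multiple of the first eigenfunction $\sin(\pi x/2)$, confirming that $4/\pi^2$ is the optimal constant; and since $w \in H^2 \subset H^1$ here, all integrations by parts are legitimate.
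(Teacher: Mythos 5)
Your proof is correct, but note that the paper itself offers no proof of this lemma at all: it is stated in the appendix with citations to Hardy--Littlewood--P\'olya and to Krstic's book, and used as a black box. So there is no in-paper argument to deviate from; what you have supplied is essentially the classical proof behind the citation. Your argument is sound in all its steps: the reduction $v = w - w(0)$ is legitimate since $H^1(0,1)\hookrightarrow C([0,1])$ makes $w(0)$ well defined; the families $\phi_k(x)=\sqrt{2}\sin\bigl((k-\tfrac12)\pi x\bigr)$ and $\psi_k(x)=\sqrt{2}\cos\bigl((k-\tfrac12)\pi x\bigr)$ are indeed complete orthonormal bases of $L_2(0,1)$ (eigenfunctions of regular Sturm--Liouville problems); and your integration-by-parts identity $\int_0^1 v_x\psi_k\,dx=\sqrt{\lambda_k}\,c_k$ correctly discards both boundary terms using only $v(0)=0$ and $\psi_k(1)=0$, which is the right way to avoid justifying term-by-term differentiation of the series. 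Two remarks on what your route buys relative to cruder alternatives one often sees in the PDE-control literature (e.g.\ Krstic-style estimates obtained from $v(x)=\int_0^x v_x$ plus Cauchy--Schwarz, which yield the non-sharp constant $1$ rather than $4/\pi^2$): the spectral expansion gives the \emph{sharp} constant $1/\lambda_1 = 4/\pi^2$ together with the equality case $v \in \mathrm{span}\{\sin(\pi x/2)\}$, and it reveals that the hypothesis $w\in H^2$ in the paper's statement is stronger than needed --- $w \in H^1$ suffices, as your proof only ever uses $v\in H^1$, $v(0)=0$. The sharp constant is exactly what the paper's Lemmas~\ref{lem:dual_LOI} and~\ref{lem:primal_LOI} exploit (the factor $\frac{\pi^2}{4}\alpha\epsilon_1$ in bounding $-\int_0^1 z_x^2\,a\,M\,dx$), so nothing weaker would serve.
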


\begin{proof}[Lemma~\ref{lem:dual_LOI}]
We begin by considering the following decomposition
\begin{align}
&\ip{\mathcal{A}\pop z}{z}+\ip{z}{\mathcal{A}\pop z}  \nonumber \\
&= 2 \igzo \left(a(x)\frac{\partial^2}{\partial x^2}\left[(\pop z)(x) \right]+b(x)\pfx \left[(\pop z)(x) \right]\right)z(x)dx \nonumber \\
& \quad +\igzo  c(x)(\pop z)(x)  z(x) dx \nonumber \\
&\label{eqn:dual_gamma}= 2 \left(\Gamma_1+\Gamma_2+\Gamma_3+\Gamma_4+\Gamma_5  \right),
\end{align} where
\begin{align*}
\Gamma_1 =& \igzo z(x)a(x) \frac{\partial^2}{\partial x^2} \left[M(x)z(x) \right]dx,\\
\Gamma_2 =& \igzo z(x)b(x) \pfx \left[M(x)z(x) \right]dx,\\
\Gamma_3 =& \igzo z(x)a(x)\frac{\partial^2}{\partial x^2} \left(\igzx K_1(x,\xi)z(\xi)d\xi\right)dx \\
&+\igzo z(x)a(x)\frac{\partial^2}{\partial x^2} \left(\igxo K_2(x,\xi)z(\xi)d\xi\right)dx \\
\Gamma_4 =&\igzo z(x)b(x) \pfx \left(\igzx K_1(x,\xi)z(\xi)d\xi\right)dx\\
&+\igzo z(x)b(x) \pfx \left(\igxo K_2(x,\xi)z(\xi)d\xi\right)dx,\\
\Gamma_5 = &\igzo z(x)^2 M(x)c(x)dx\\
&+ \igzo \igzx z(x)c(x)K_1(x,\xi)z(\xi)d\xi dx \\
&+ \igzo \igxo z(x)c(x)K_2(x,\xi)z(\xi)d\xi dx,
\end{align*} where we have used the fact that
\[
K(x,\xi) = \begin{cases} K_1(x,\xi) & \xi \leq x \\
K_2(x,\xi) &  \xi>x \end{cases}.
\]

Before we proceed, we calculate the boundary condition at $x=0$. Since $z=\pinv w$, for any $w\in H^2(0,1)$ with $w_x(0)=0$, we have that $w=\pop z$. Using the definition of $\pop$,
\begin{align*}
w_x(0)=M_x(0)z(0)+M(0)z_x(0)+\igzo K_{2,x}(0,x)z(x)dx,
\end{align*} where we have used the fact that $K_1(x,\xi)=K_2(\xi,x)$. Since $w_x(0)=0$, we conclude that
\begin{equation}\label{eqn:dual:BC}
-M(0)z_x(0)=M_x(0)z(0)+\igzo K_{2,x}(0,x)z(x)dx.
\end{equation} Applying integration by parts twice and using the boundary condition at $x=0$, we get
\begin{align*}
\Gamma_1 =& -\igzo z_x(x)^2 a(x)M(x)dx \\
&+ \frac{1}{2}\igzo z(x)^2 \left[a_{xx}(x)M(x)+a(x)M_{xx}(x) \right]dx \\
&+ \frac{1}{2}z(1)^2 \left[a(1)M_x(1)-a_x(1)M(1) \right] \\
&+ \frac{1}{2} z(0)^2 \left[a(0)M_x(0)+a_x(0)M(0) \right] \\
&+z(1)a(1)M(1)z_x(1)+z(0)\igzo a(0)K_{2,x}(0,x)z(x)dx.
\end{align*}
 
From Theorem~\ref{thm:jointpos} it is readily established that $M(x) \geq \epsilon_1$. Additionally, we have that $a(x) \geq \alpha$. Therefore, $a(x)M(x) \geq \alpha \epsilon_1$ and we may apply Lemma~\ref{lem:wirtinger} to produce
\[-\igzo z_x(x)^2 a(x)M(x)dx \leq -\frac{\pi^2}{4}\alpha \epsilon_1 \igzo \left(z(x)-z(0)\right)^2 dx. \]
Therefore, \begin{align}
\Gamma_1 \leq & \igzo z(x)^2 \left( \frac{1}{2}\left[a_{xx}(x)M(x)+a(x)M_{xx}(x) \right]-\frac{\pi^2}{4}\alpha \epsilon_1 \right)dx  \nonumber \\
&+ \frac{1}{2}z(1)^2 \left[a(1)M_x(1)-a_x(1)M(1) \right] \nonumber \\
&+ \frac{1}{2} z(0)^2 \left[a(0)M_x(0)+a_x(0)M(0)-\frac{\pi^2}{2}\alpha \epsilon_1 \right] \nonumber \\
&+z(0)\igzo \left(a(0)K_{2,x}(0,x)+\frac{\pi^2}{2}\alpha \epsilon_1 \right)z(x)dx \nonumber \\
&\label{eqn:dual_gamma_1}+z(1)a(1)M(1)z_x(1).
\end{align}
Similarly, applying integration by parts once,
\begin{align}
\Gamma_2 = & \frac{1}{2}\igzo z(x)^2 \left[b(x)M_x(x)-b_x(x)M(x) \right]dx \nonumber \\
&\label{eqn:dual_gamma_2}+\frac{1}{2}z(1)^2 b(1)M(1)-\frac{1}{2}z(0)^2 b(0)M(0).
\end{align}
Now, note that for $(M,K_1,K_2) \in \Xi_{d_1,d_2,\epsilon_1,\epsilon_2}$, we have $K_1(x,\xi)=K_2(\xi,x)$ and thus $K_1(x,x)=K_2(x,x)$. Utilizing this property and applying integration by parts twice
\begin{align}
\Gamma_3 = & \igzo z(x)^2 \left(a(x)\left[\pfx[K_1(x,\xi)-K_2(x,\xi)] \right]_{\xi=x} \right)dx \nonumber \\
&+\igzo \igzx z(x)a(x)K_{1,xx}(x,\xi)z(\xi)d\xi dx \nonumber \\
&+\igzo \igxo z(x)a(x)K_{2,xx}(x,\xi)z(\xi)d\xi dx \nonumber
\end{align} Dividing the double integrals,
\begin{align}
\Gamma_3 = & \igzo z(x)^2 \left(a(x)\left[\pfx[K_1(x,\xi)-K_2(x,\xi)] \right]_{\xi=x} \right)dx \nonumber \\
&+\hlf \igzo \igzx z(x)a(x)K_{1,xx}(x,\xi)z(\xi)d\xi dx \nonumber \\
&+\hlf \igzo \igxo z(x)a(x)K_{2,xx}(x,\xi)z(\xi)d\xi dx \nonumber \\
&+\hlf \igzo \igzx z(x)a(x)K_{1,xx}(x,\xi)z(\xi)d\xi dx \nonumber \\
&+\hlf \igzo \igxo z(x)a(x)K_{2,xx}(x,\xi)z(\xi)d\xi dx \nonumber
\end{align} Changing the order of integration, switching between $x$ and $\xi$ and using the fact that $K_1(x,\xi)=K_2(\xi,x)$ in the last two double integral produces
 \begin{align}
&\Gamma_3 \nonumber \\
&=  \igzo z(x)^2 \left(a(x)\left[\pfx[K_1(x,\xi)-K_2(x,\xi)] \right]_{\xi=x} \right)dx \nonumber \\
& \quad + \hlf \igzo \igzx z(x)\left[a(x)K_{1,xx}(x,\xi)+a(\xi)K_{1,\xi \xi}(x,\xi)\right]z(\xi)d\xi dx \nonumber \\
&\label{eqn:dual_gamma_3} \quad + \hlf \igzo \igxo z(x)\left[a(x)K_{2,xx}(x,\xi)+a(\xi)K_{2,\xi \xi}(x,\xi)\right]z(\xi)d\xi dx.
\end{align} Applying integration by parts once and employing the previously performed change of order of integration
 \begin{align}
 &\Gamma_4 \nonumber \\
 &=\hlf \igzo \igzx z(x) \left[b(x)K_{1,x}(x,\xi)+b(\xi)K_{1,\xi}(x,\xi) \right] z(\xi)z(x)d\xi dx \nonumber \\
 &\label{eqn:dual_gamma_4} \quad +\hlf \igzo \igxo z(x) \left[b(x)K_{2,x}(x,\xi)+b(\xi)K_{2,\xi}(x,\xi) \right] z(\xi)z(x)d\xi dx.
 \end{align} Finally, applying a change of order of integration as applied to $\Gamma_3$ and $\Gamma_4$,
\begin{align}
\Gamma_5=& \igzo z(x)^2 M(x)c(x)dx \nonumber \\
& +\hlf \igzo \igzx z(x) \left[c(x)+c(\xi)\right]K_1(x,\xi)z(\xi)d\xi dx \nonumber \\
&\label{eqn:dual_gamma_5} +\hlf \igzo \igxo z(x) \left[c(x)+c(\xi)\right]K_2(x,\xi)z(\xi)d\xi dx
\end{align} Substituting \eqref{eqn:dual_gamma_1}-\eqref{eqn:dual_gamma_5} in \eqref{eqn:dual_gamma} and using Definition~\ref{def:dual},
\begin{align*}
&\ip{\mcl{A}\pop z}{z}+\ip{z}{\mcl{A}\pop z} \\
&\leq  \igzo z(x)^2 T_0(x)dx + \igzo \igzx z(x)T_1(x,\xi)z(\xi)d\xi dx \\
& \quad  + \igzo \igxo z(x)T_2(x,\xi)z(\xi)d\xi dx \\
&\quad +z(0)\left(T_3z(0)+\igzo T_4(x)z(x)dx \right)\\
& \quad \quad  + z(1) \left(T_5 z(1) +T_6z_x(1) \right).
\end{align*}
Finally, using the definition of operator $\mcl{T}$,
\begin{align*}
&\ip{\mcl{A}\pop z}{z}+\ip{z}{\mcl{A}\pop z} \\
&\leq  \ip{\mcl{T}z}{z}+z(0)\left(T_3z(0)+\igzo T_4(x)z(x)dx \right)\\
& \quad \quad  + z(1) \left(T_5 z(1) +T_6z_x(1) \right).
\end{align*}
\end{proof}

\begin{proof}[Lemma~\ref{lem:primal_LOI}]
Using the self-adjointedness of operator $\sop$ we begin with the following decomposition
\begin{align}
&\ip{\mathcal{A}w}{\mathcal{S}w}+\ip{\mathcal{S}\mathcal{A}w}{w} \nonumber \\
&=2\ip{\mathcal{A}w}{\mathcal{S}w} \nonumber \\
 &=2\igzo \left(a(x)w_{xx}(x)+b(x)w_x(x)+c(x)w(x) \right)(\sop w)(x)dx \nonumber \\
&\label{Gamma_eqn} = 2\left(\Gamma_1+\Gamma_2+\Gamma_3+\Gamma_4+\Gamma_5\right),
\end{align}
 where
 \begin{align*}
 \Gamma_1 =& \igzo w_{xx}(x)a(x)N(x)w(x)dx, \\
 \Gamma_2 =& \igzo w_x(x)b(x)N(x)w(x)dx, \\
 \Gamma_3 =& \igzo w_{xx}(x)a(x) \igzx P_1(x,\xi)w(\xi) d\xi dx \\
 &+ \igzo w_{xx}(x)a(x)\igxo P_2(x,\xi)w(\xi)d \xi  dx,\\
 \Gamma_4 = &\igzo w_x(x)b(x) \igzx P_1(x,\xi)w(\xi) d\xi dx \\
 &+ \igzo w_x(x)b(x)\igxo P_2(x,\xi)w(\xi)d\xi dx,\\
 \Gamma_5 = & \igzo w(x)^2 N(x) c(x) dx \\
 &+ \igzo \igzx w(x) c(x) P_1(x,\xi)w(\xi) d\xi dx \\
 & + \igzo \igxo w(x) c(x) P_2(x,\xi)w(\xi) d\xi dx.
\end{align*} Here we have used the fact that
\[
P(x,\xi) = \begin{cases} P_1(x,\xi) & \xi \leq x \\
P_2(x,\xi) &  \xi>x \end{cases}.
\]

Applying integration by parts twice and using the boundary condition $w_x(0)=0$ yields
 \begin{align*}
 \Gamma_1 =& - \igzo w_x(x)^2 a(x)N(x)dx \\
 &+ \hlf \igzo  \frac{\partial^2}{\partial x^2} \left[ a(x)N(x)\right] w(x)^2 dx \\
 &-  \frac{1}{2}\left(a_x(1)N(1)+ a(1)N_x(1) \right) w(1)^2 \\
 &+\hlf \left(a_x(0)N(0)+a(0)N_x(0) \right)w(0)^2 \\
 &+ w_x(1)a(1)N(1) w(1).
 \end{align*}
 Since $a(x) \geq \alpha >0$ and $\{N,P_1,P_2\} \in \Xi_{d_1,d_2,\epsilon_1,\epsilon_2}$, we have $a(x)N(x) \geq \alpha \epsilon_1$. Thus, by application of Lemma~\ref{lem:wirtinger} we get
 \[
 - \igzo w_x(x)^2 a(x)N(x) dx \leq -\frac{\pi^2}{4}\alpha \epsilon_1 \igzo \left(w(x)-w(0)\right)^2 dx.
 \]
Therefore, we conclude that
 \begin{align}
 \Gamma_1 \leq & \hlf \igzo w(x)^2\left( \frac{\partial^2}{\partial x^2} \left[ a(x)N(x)\right] -\frac{\pi^2}{2} \alpha \epsilon_1\right) dx \nonumber \\
 &+\frac{\pi^2}{2}\alpha \epsilon_1 w(0)\igzo w(x)dx \nonumber \\ 
 &  - \frac{1}{2}\left(a_x(1)N(1)+ a(1)N_x(1) \right) w(1)^2 \nonumber \\
&+\hlf \left(a_x(0)N(0)+a(0)N_x(0)-\frac{\pi^2}{2}\alpha \epsilon_1 \right)w(0)^2 \nonumber \\ 
 &\label{Gamma_1}+w_x(1) a(1)N(1)w(1). 
\end{align}
Similarly, applying integration by parts once
 \begin{align}
 \Gamma_2 =&  - \hlf \igzo w(x)^2   \pfx \left[ b(x)N(x)\right] dx \nonumber \\
 &\label{Gamma_2} +   \frac{1}{2}b(1)N(1) w(1)^2-\hlf b(0)N(0)w(0)^2.
 \end{align}
Now, note that for $(N,P_1,P_2) \in \Xi_{d_1,d_2,\epsilon_1,\epsilon_2}$, we have $P_1(x,\xi)=P_2(\xi,x)$ and thus $P_1(x,x)=P_2(x,x)$. Exploiting this property and using the boundary condition, we may apply integration by parts twice and use $w_x(0)=0$ to obtain 
\begin{align}
 \Gamma_3  = &\igzo w(x)^2\left( \left[ \pfx \left[a(x)(P_1(x,\xi)-P_2(x,\xi)) \right] \right]_{\xi=x} \right) dx \nonumber \\
 &  +  \igzo w(x)\igzx  \left( \frac{\partial^2}{\partial x^2}\left[a(x)P_1(x,\xi) \right] \right) w(\xi) d\xi dx \nonumber \\
 &  +  \igzo w(x)\igxo  \left( \frac{\partial^2}{\partial x^2}\left[a(x)P_2(x,\xi) \right] \right) w(\xi) d\xi  dx \nonumber \\
 &   - w(1) \igzo \left(a_x(1)P_1(1,x)+a(1)P_{1,x}(1,x) \right) w(x) dx \nonumber \\
 & + w(0) \igzo \left(a_x(0)P_2(0,x)+a(0)P_{2,x}(0,x) \right)w(x)dx \nonumber \\
 &   + w_x(1) \igzo a(1)P_1(1,x) w(x)dx. \nonumber
 \end{align} We can divide the two double integrals as
\begin{align}
 \Gamma_3 \nonumber  = & \igzo w(x)^2\left( \left[ \pfx \left[a(x)(P_1(x,\xi)-P_2(x,\xi)) \right] \right]_{\xi=x} \right) dx \nonumber \\
 & + \hlf \igzo w(x)\igzx  \left( \frac{\partial^2}{\partial x^2}\left[a(x)P_1(x,\xi) \right] \right) w(\xi) d\xi dx \nonumber \\
 &  + \hlf \igzo w(x)\igxo  \left( \frac{\partial^2}{\partial x^2}\left[a(x)P_2(x,\xi) \right] \right) w(\xi) d\xi  dx \nonumber \\
&  + \hlf \igzo w(x)\igzx  \left( \frac{\partial^2}{\partial x^2}\left[a(x)P_1(x,\xi) \right] \right) w(\xi) d\xi dx \nonumber \\
 &  + \hlf \igzo w(x)\igxo  \left( \frac{\partial^2}{\partial x^2}\left[a(x)P_2(x,\xi) \right] \right) w(\xi) d\xi  dx \nonumber \\
 &   - w(1) \igzo \left(a_x(1)P_1(1,x)+a(1)P_{1,x}(1,x) \right) w(x) dx \nonumber \\
& + w(0) \igzo \left(a_x(0)P_2(0,x)+a(0)P_{2,x}(0,x) \right)w(x)dx \nonumber \\ 
 &   + w_x(1) \igzo a(1)P_1(1,x) w(x)dx. \nonumber
 \end{align}
  Changing the order of integration in the last two double integrals, switching the variables $x$ and $\xi$ and using $P_1(x,\xi)=P_2(\xi,x)$,
 \begin{align}
 \Gamma_3 = & \igzo w(x)^2\left( \left[ \pfx \left[a(x)(P_1(x,\xi)-P_2(x,\xi)) \right] \right]_{\xi=x} \right) dx \nonumber \\
 &  +  \igzo \igzx w(x) \left(\hlf \frac{\partial^2}{\partial x^2}\left[a(x)P_1(x,\xi) \right]\right)w(\xi)d\xi dx \nonumber \\
 &  +\igzo \igzx w(x)\left(\hlf \frac{\partial^2}{\partial \xi^2}\left[a(\xi)P_1(x,\xi) \right] \right) w(\xi) d\xi dx \nonumber \\
 & + \igzo \igxo w(x) \left(\hlf \frac{\partial^2}{\partial x^2}\left[a(x)P_2(x,\xi) \right]\right)w(\xi)d\xi dx \nonumber \\
 & +\igzo \igxo w(x) \left(\hlf \frac{\partial^2}{\partial \xi^2}\left[a(\xi)P_2(x,\xi) \right] \right) w(\xi) d\xi dx \nonumber \\
 &  - w(1) \igzo \left(a_x(1)P_1(1,x)+a(1)P_{1,x}(1,x) \right) w(x) dx \nonumber \\
 & + w(0) \igzo \left(a_x(0)P_2(0,x)+a(0)P_{2,x}(0,x) \right)w(x)dx \nonumber \\
 &\label{Gamma_3}   + w_x(1) \igzo a(1)P_1(1,x) w(x)dx.
 \end{align}
  Applying integration by parts once and following the same procedure as for $\Gamma_3$,
 \begin{align}
 \Gamma_4 =&- \igzo \igzx w(x) \left(\hlf \pfx \left[b(x)P_1(x,\xi)\right]\right)w(\xi)d\xi dx \nonumber \\
 &- \igzo \igzx w(x)\left(\hlf \frac{\partial}{\partial \xi}\left[ b(\xi)P_1(x,\xi)\right] \right) w(\xi) d\xi dx  \nonumber \\
 & - \igzo \igxo w(x) \left(\hlf \pfx \left[b(x)P_2(x,\xi) \right]\right)w(\xi)d\xi dx \nonumber \\
 &- \igzo \igxo w(x)\left(\hlf \frac{\partial}{\partial \xi} \left[b(\xi)P_2(x,\xi) \right] \right) w(\xi) d\xi dx \nonumber \\
 &  + w(1) \igzo b(1)P_1(1,x)w(x)dx \nonumber \\
 &\label{Gamma_4} -w(0)\igzo b(0)P_2(0,x)w(x)dx.
 \end{align} 
Finally, employing a change of order of integration as done for $\Gamma_3$ and $\Gamma_4$ produces
 \begin{align}
 \Gamma_5 = & \igzo w(x)^2 N(x) c(x) dx \nonumber \\
 &+ \igzo \igzx w(x) \left(\hlf \left[c(x)+c(\xi) \right] P_1(x,\xi) \right) w(\xi) d\xi dx \nonumber \\
 &\label{Gamma_5} + \igzo \igxo w(x) \left( \hlf \left[c(x)+c(\xi)\right] P_2(x,\xi)\right)w(\xi) d\xi dx.
 \end{align}
 Substituting \eqref{Gamma_1}-\eqref{Gamma_5} into \eqref{Gamma_eqn} and using Definition~\ref{def:primal} gives us
 \begin{align*}
 &\ip{\mathcal{A}w}{\mathcal{S}w}+\ip{\mathcal{S}\mathcal{A}w}{w} \\
 &\leq  \igzo w(x)^2 Q_0(x)dx + \igzo \igzx w(x)Q_1(x,\xi)w(\xi)d\xi dx\\
 & \quad  + \igzo \igxo w(x)Q_2(x,\xi)w(\xi)d\xi dx \\
 & \quad + w(0) \left(Q_3 w(0) + \igzo Q_4(x)w(x)dx \right) \\
 & \quad +w(1) \left(Q_5 w(1)+\igzo Q_6(x)w(x)dx \right) \\
 &\quad+ w_x(1) \left(Q_7 w(1)+\igzx Q_8(x)w(x)dx \right).
 \end{align*} 
 Finally, using the definition of operator $\mcl{Q}$,
 \begin{align*}
 &\ip{\mathcal{A}w}{\mathcal{S}w}+\ip{\mathcal{S}\mathcal{A}w}{w} \\
 &\leq  \ip{w}{\mcl{Q}w}  + w(0) \left(Q_3 w(0) + \igzo Q_4(x)w(x)dx \right) \\
 & \quad +w(1) \left(Q_5 w(1)+\igzo Q_6(x)w(x)dx \right) \\
 &\quad+ w_x(1) \left(Q_7 w(1)+\igzx Q_8(x)w(x)dx \right).
 \end{align*}
\end{proof}
\section*{ACKNOWLEDGMENT}

This research was carried out with the financial support of NSF CAREER Grant CMMI-1151018.


\bibliographystyle{plain}
\bibliography{CDC_015}

\begin{thebibliography}{10}

\bibitem{balogh2004stability}
A.~Balogh and M.~Krstic.
\newblock Stability of partial difference equations governing control gains in
  infinite-dimensional backstepping.
\newblock {\em Systems \& control letters}, 51(2):151--164, 2004.

\bibitem{blum1998complexity}
L.~Blum.
\newblock {\em {Complexity and real computation}}.
\newblock Springer Verlag, 1998.

\bibitem{dullerud2000course}
G.E. Dullerud and F.G. Paganini.
\newblock {\em {A course in robust control theory: a convex approach}}.
\newblock Springer Verlag, 2000.

\bibitem{fridman2009lmi}
E.~Fridman and Y.~Orlov.
\newblock An {LMI} approach to ${H}_\infty$ boundary control of semilinear
  parabolic and hyperbolic systems.
\newblock {\em Automatica}, 45(9):2060--2066, 2009.

\bibitem{gohberg1984time}
I.~Gohberg and M.~A. Kaashoek.
\newblock Time varying linear systems with boundary conditions and integral
  operators. {I}. {T}he transfer operator and its properties.
\newblock {\em Integral equations and Operator theory}, 7(3):325--391, 1984.

\bibitem{gu2003stability}
K.~Gu, V.~Kharitonov, and J.~Chen.
\newblock {\em Stability of time-delay systems}.
\newblock Birkhauser, 2003.

\bibitem{hardy1952inequalities}
G.~H. Hardy, J.~E. Littlewood, and G.~Polya.
\newblock {\em Inequalities}.
\newblock Cambridge university press, 1952.

\bibitem{kreyszig1989introductory}
E.~Kreyszig.
\newblock {\em Introductory functional analysis with applications}, volume~21.
\newblock {W}iley, 1989.

\bibitem{krstic2008adaptive}
M.~Krstic and A.~Smyshlyaev.
\newblock Adaptive boundary control for unstable parabolic {PDE}s, {P}art {I}:
  Lyapunov design.
\newblock {\em IEEE Transactions on Automatic Control}, 53(7):1575--1591, 2008.

\bibitem{krstic2008boundary}
M.~Krstic and A.~Smyshlyaev.
\newblock {\em Boundary control of PDEs: A course on backstepping designs},
  volume~16.
\newblock Society for Industrial Mathematics, 2008.

\bibitem{lasiecka2000control}
I.~Lasiecka and R.~Triggiani.
\newblock {\em Control theory for partial differential equations: Volume 1,
  Abstract parabolic systems: Continuous and approximation theories}, volume~1.
\newblock Cambridge University Press, 2000.

\bibitem{morris2010approximation}
K.~Morris and C.~Navasca.
\newblock Approximation of low rank solutions for linear quadratic control of
  partial differential equations.
\newblock {\em Computational Optimization and Applications}, 46(1):93--111,
  2010.

\bibitem{morris1994design}
K.~A. Morris.
\newblock Design of finite-dimensional controllers for infinite-dimensional
  systems by approximation.
\newblock 1994.

\bibitem{murray2002mathematical}
J.~D. Murray.
\newblock {\em Mathematical biology}, volume~2.
\newblock Springer, 2002.

\bibitem{papachristodoulou2005constructing}
A.~Papachristodoulou, M.~M. Peet, and S.~Lall.
\newblock Constructing {L}yapunov-{K}rasovskii functionals for linear time
  delay systems.
\newblock In {\em Proceedings of the 2005 American Control Conference}, pages
  2845--2850.

\bibitem{parrilo2000structured}
P.A. Parrilo.
\newblock {\em Structured semidefinite programs and semialgebraic geometry
  methods in robustness and optimization}.
\newblock PhD thesis, California Institute of Technology, 2000.

\bibitem{peetlmi}
M.~M. Peet.
\newblock L{M}{I} parametrization of {L}yapunov functions for
  infinite-dimensional systems: {A} framework.
\newblock In {\em American Control Conference (ACC), 2014}, pages 359--366.
  IEEE, 2014.

\bibitem{peet2008using}
M.~M. Peet and A.~Papachristodoulou.
\newblock Using polynomial semi-separable kernels to construct
  infinite-dimensional {L}yapunov functions.
\newblock In {\em 47th IEEE Conference on Decision and Control, 2008}, pages
  847--852.

\bibitem{prajna2001introducing}
S.~Prajna, A.~Papachristodoulou, and P.~A. Parrilo.
\newblock Introducing {SOSTOOLS}: A general purpose sum of squares programming
  solver.
\newblock In {\em Proceedings of the 41st IEEE Conference on Decision and
  Control, 2002}, volume~1, pages 741--746.

\bibitem{staffans1997quadratic}
O.~Staffans.
\newblock Quadratic optimal control of stable well-posed linear systems.
\newblock {\em Transactions of the American Mathematical Society},
  349(9):3679--3715, 1997.

\bibitem{staffans1998quadratic}
O.~J. Staffans.
\newblock Quadratic optimal control of well-posed linear systems.
\newblock {\em SIAM Journal on Control and Optimization}, 37(1):131--164, 1998.

\bibitem{sturm1999using}
J.~F. Sturm.
\newblock Using {S}e{D}u{M}i 1.02, a {MATLAB} toolbox for optimization over
  symmetric cones.
\newblock {\em Optimization methods and software}, 11(1-4):625--653, 1999.

\bibitem{triggiani1980boundary}
R.~Triggiani.
\newblock {Boundary feedback stabilizability of parabolic equations}.
\newblock {\em Applied Mathematics and Optimization}, 6(1):201--220, 1980.

\bibitem{van1993h}
B.~Van~Keulen.
\newblock {\em ${H}_\infty$-control for distributed parameter systems: a state
  space approach}.
\newblock Birkhauser, 1993.

\bibitem{weiss1997optimal}
M.~Weiss and G.~Weiss.
\newblock Optimal control of stable weakly regular linear systems.
\newblock {\em Mathematics of control, signals and systems}, 10(4):287--330,
  1997.

\bibitem{witrant2007control}
E.~Witrant, E.~Joffrin, S.~Br{\'e}mond, G.~Giruzzi, D.~Mazon, O.~Barana, and
  P.~Moreau.
\newblock A control-oriented model of the current profile in tokamak plasma.
\newblock {\em Plasma Physics and Controlled Fusion}, 49(7):1075, 2007.

\end{thebibliography}


\begin{thebibliography}{10}

\bibitem{Valmo_2}
M.~Ahmadi, G.~Valmorbida, and A.~Papachristodoulou.
\newblock Input-{O}utput {A}nalysis of {D}istributed {P}arameter {S}ystems
  {U}sing {C}onvex {O}ptimization.
\newblock In {\em 53rd Conference on Decision and Control (CDC)}. IEEE, 2014.

\bibitem{balogh2004stability}
A.~Balogh and M.~Krstic.
\newblock Stability of partial difference equations governing control gains in
  infinite-dimensional backstepping.
\newblock {\em Systems \& control letters}, 51(2):151--164, 2004.

\bibitem{blum1998complexity}
L.~Blum.
\newblock {\em {Complexity and real computation}}.
\newblock Springer Verlag, 1998.

\bibitem{coron2008dissipative}
J.~M. Coron, G.~Bastin, and B.~d'Andr{\'e}a Novel.
\newblock Dissipative boundary conditions for one-dimensional nonlinear
  hyperbolic systems.
\newblock {\em SIAM Journal on Control and Optimization}, 47(3):1460--1498,
  2008.

\bibitem{coron2007strict}
J.~M. Coron, B.~d'Andrea Novel, and G.~Bastin.
\newblock A strict {L}yapunov function for boundary control of hyperbolic
  systems of conservation laws.
\newblock {\em IEEE Transactions on Automatic Control}, 52(1):2--11, 2007.

\bibitem{fridman2009lmi}
E.~Fridman and Y.~Orlov.
\newblock An {LMI} approach to ${H}_\infty$ boundary control of semilinear
  parabolic and hyperbolic systems.
\newblock {\em Automatica}, 45(9):2060--2066, 2009.

\bibitem{gahlawat2011designing}
A.~Gahlawat and M.M. Peet.
\newblock Designing observer-based controllers for {PDE} systems: A
  heat-conducting rod with point observation and boundary control.
\newblock In {\em 50th IEEE Conference on Decision and Control and European
  Control Conference (CDC-ECC)}, pages 6985--6990. IEEE, 2011.

\bibitem{gohberg1984time}
I.~Gohberg and M.~A. Kaashoek.
\newblock Time varying linear systems with boundary conditions and integral
  operators. {I}. {T}he transfer operator and its properties.
\newblock {\em Integral equations and Operator theory}, 7(3):325--391, 1984.

\bibitem{hardy1952inequalities}
G.~H. Hardy, J.~E. Littlewood, and G.~Polya.
\newblock {\em Inequalities}.
\newblock Cambridge university press, 1952.

\bibitem{kamyar2014polynomial}
R.~Kamyar and M.~Peet.
\newblock Polynomial {O}ptimization with {A}pplications to {S}tability
  {A}nalysis and {C}ontrol-{A}lternatives to {S}um of {S}quares.
\newblock {\em arXiv:1408.5119}, 2014.

\bibitem{kreyszig1989introductory}
E.~Kreyszig.
\newblock {\em Introductory functional analysis with applications}, volume~21.
\newblock {W}iley, 1989.

\bibitem{krstic2008adaptive}
M.~Krstic and A.~Smyshlyaev.
\newblock Adaptive boundary control for unstable parabolic {PDE}s, {P}art {I}:
  Lyapunov design.
\newblock {\em IEEE Transactions on Automatic Control}, 53(7):1575--1591, 2008.

\bibitem{krstic2008boundary}
M.~Krstic and A.~Smyshlyaev.
\newblock {\em Boundary control of {PDE}s: A course on backstepping designs},
  volume~16.
\newblock Society for Industrial Mathematics, 2008.

\bibitem{murray2002mathematical}
J.~D. Murray.
\newblock {\em Mathematical biology}, volume~2.
\newblock Springer, 2002.

\bibitem{papachristodoulou2006analysis}
A.~Papachristodoulou and M.~M. Peet.
\newblock On the analysis of systems described by classes of partial
  differential equations.
\newblock In {\em 45th IEEE Conference on Decision and Control, 2006}, pages
  747--752.

\bibitem{parrilo2000structured}
P.A. Parrilo.
\newblock {\em Structured semidefinite programs and semialgebraic geometry
  methods in robustness and optimization}.
\newblock PhD thesis, California Institute of Technology, 2000.

\bibitem{peet2006positive}
M.~Peet, A.~Papachristodoulou, and S.~Lall.
\newblock Positive forms and stability of linear time-delay systems.
\newblock In {\em 45th IEEE Conference on Decision and Control, 2006}, pages
  187--193. IEEE, 2006.

\bibitem{peetlmi}
M.~M. Peet.
\newblock L{M}{I} parametrization of {L}yapunov functions for
  infinite-dimensional systems: {A} framework.
\newblock In {\em American Control Conference (ACC), 2014}, pages 359--366.
  IEEE, 2014.

\bibitem{peet2008using}
M.~M. Peet and A.~Papachristodoulou.
\newblock Using polynomial semi-separable kernels to construct
  infinite-dimensional {L}yapunov functions.
\newblock In {\em 47th IEEE Conference on Decision and Control, 2008}, pages
  847--852.

\bibitem{prajna2001introducing}
S.~Prajna, A.~Papachristodoulou, and P.~A. Parrilo.
\newblock Introducing {SOSTOOLS}: A general purpose sum of squares programming
  solver.
\newblock In {\em Proceedings of the 41st IEEE Conference on Decision and
  Control, 2002}, volume~1, pages 741--746.

\bibitem{Valmo_1}
G.~Valmorbida, M.~Ahmadi, and A.~Papachristodoulou.
\newblock Semi-{D}efinite {P}rogramming and {F}unctional {I}nequalities for
  {D}istributed {P}arameter {S}ystems.
\newblock In {\em 53rd Conference on Decision and Control (CDC)}. IEEE, 2014.

\bibitem{witrant2007control}
E.~Witrant, E.~Joffrin, S.~Br{\'e}mond, G.~Giruzzi, D.~Mazon, O.~Barana, and
  P.~Moreau.
\newblock A control-oriented model of the current profile in tokamak plasma.
\newblock {\em Plasma Physics and Controlled Fusion}, 49(7):1075, 2007.

\end{thebibliography}

\end{document}